\documentclass[10pt]{amsart}

\usepackage[ansinew]{inputenc}

\usepackage{textcomp}

\usepackage{cite}

\usepackage{amsthm}
\usepackage{amssymb}
\usepackage{amsfonts}

\newtheorem{thm}{Theorem}
\newtheorem{lem}[thm]{Lemma}
\newtheorem{cor}[thm]{Corollary}

\newtheorem{pro}[thm]{Proposition}

\theoremstyle{remark}
\newtheorem*{rem}{Remark}

\newcommand{\lk}{\left(}
\newcommand{\rk}{\right)}
\newcommand{\vo}{|\Omega|}
\newcommand{\s}{\sigma}
\newcommand{\g}{\gamma}
\newcommand{\R}{\mathbb{R}}
\newcommand{\La}{\Lambda}
\newcommand{\G}{\tilde{\Gamma}}
\newcommand{\B}{\tilde{B}}

\title{Universal Bounds for Traces of the Dirichlet Laplace Operator}
\author{Leander Geisinger \ \& \ Timo Weidl}

% AMS subject classicfication: 35P15, 35Q80

\begin{document}

\begin{abstract}
We derive upper bounds for the trace of the heat kernel $Z(t)$ of the Dirichlet Laplace operator in an open set $\Omega \subset \R^d$, $d \geq 2$. In domains of finite volume the result improves an inequality of Kac. Using the same methods we give bounds on $Z(t)$ in domains of infinite volume.

For domains of finite volume the bound on $Z(t)$ decays exponentially as $t$ tends to infinity and it contains the sharp first term and a correction term reflecting the properties of the short time asymptotics of $Z(t)$.
To prove the result we employ refined Berezin-Li-Yau inequalities for eigenvalue means.
\end{abstract}

\maketitle

\section{Introduction and main results} \label{intro}

Let $\Omega$ be an open subset of $\mathbb{R}^d$, $d \geq 2$.  Consider the Laplace operator $-\Delta_\Omega$ on $\Omega$ 
subject to Dirichlet boundary conditions defined in the form sense on the form domain $H^1_0(\Omega)$. If the embedding $H^1_0(\Omega) \hookrightarrow L^2(\Omega)$ is compact, e.g. if the volume of $\Omega$ is finite, 
the spectrum of $-\Delta_\Omega$ is discrete and consists of a monotone sequence of positive eigenvalues 
$0 < \lambda_1 \leq \lambda_2 \leq \lambda_3 \leq \dots$ 
accumulating at infinity. We count these eigenvalues according to their  multiplicity. 
%Let $\vo_d$ denote the $d$-dimensional volume of an open set $\Omega$, where the index $d$ is omitted if not necessary. 

The main goal of this paper is to derive some new universal upper bounds for the trace of the heat kernel
\begin{displaymath}
Z(t) \,= \,\textnormal{Tr} \lk e^{+\Delta_\Omega t} \rk \,= \,\sum_k e^{-\lambda_k t}
\end{displaymath}
which are valid for arbitrary open sets $\Omega \subset \mathbb{R}^d$ with finite volume $\vo$ and for all $t > 0$.
%and for $\sigma \geq 1$ for the eigenvalue means
%\begin{displaymath}
%R_\sigma(\La) \,= \,\textnormal{Tr} \lk -\Delta - \La \rk_-^\sigma \,= \,\sum_{k} \lk \La - \lambda_k \rk_+^\sigma
%\end{displaymath}
%that are valid for all open sets $\Omega \subset \mathbb{R}^d$ with finite volume and for all $t > 0$ and $\Lambda > 0$ %respectively.
The first and most fundamental bound of this type is due to M. Kac, \cite{Kac02}.  He proved that for any open domain
$\Omega \subset \R^d$ and all $t > 0$ the estimate
\begin{equation} \label{basic}
 Z(t) \,\leq \,\frac{\vo}{\left( 4 \pi t \right)^{\frac{d}{2}}}
\end{equation}
holds true. This bound is sharp in the sense that it reflects the leading term of the short time asymptotics of the function $Z(t)$, 
see \cite{Min, Kac01}
\begin{equation}\label{ztasymp}
 Z(t) \,= \,\frac{\vo}{\left( 4 \pi t \right)^{\frac{d}{2}}}\quad \mbox{as} \quad t\to 0+.
\end{equation}
%
%Such universal estimates are required in mathematical physics, see e. g. \cite{AngNen}, and it is an important question how they %can be improved. 
Several improvements of (\ref{basic}) are known, e.g.  see 
\cite{Ber01,FLV,Dav02,Dav01,Sim01,Ber03} and further references therein.  
For example, M. van den Berg proved in \cite{Ber02}, that if $\Omega$ is a connected 
region with a smooth boundary $\partial \Omega$ and a surface area $|\partial \Omega|$, then
\begin{equation*} 
\left| Z(t) - \frac{\vo}{(4 \pi t)^{\frac{d}{2}}} + \frac{|\partial \Omega|}{4 \,(4 \pi t)^{\frac{d-1}{2}}} \right| \,\leq \,\frac{d^4}{\pi^{\frac{d}{2}}} \frac{\vo}{t^{\frac d2 -1}R^2}, \quad t>0,
\end{equation*}
where the constant $R$ depends on properties of $\partial \Omega$. 
This estimate contains even the second term of the short time asymptotic expansion of $Z(t)$,
see \cite{McSi,Smi,BrCa} and \cite{Bro}. 
Most of these results are based on a probabilistic approach and implement local estimates for the heat kernel.
Therefore one has to impose appropriate conditions on $\Omega$ and on its boundary $\partial \Omega$.

We use a different approach based on some refined spectral estimates
for the Riesz means
\begin{displaymath}
R_\sigma(\La) \,= \,\textnormal{Tr} \lk -\Delta_\Omega - \La \rk_-^\sigma \,= \,\sum_{k} \lk \La - \lambda_k \rk_+^\sigma,
\quad\Lambda>0.
\end{displaymath}
For these objects the fundamental bounds are given by the Berezin-Li-Yau inequalities
\begin{equation} \label{beliyau}
R_\sigma(\La) \,\leq \,L^{cl}_{\sigma,d} \,\vo \,\Lambda^{\sigma+\frac{d}{2}},\quad \sigma\geq 1 \, , \ \Lambda>0,
\end{equation} 
where 
\begin{displaymath}
L^{cl}_{\sigma,d} \,= \,\frac{\Gamma(\sigma +1)}{(4\pi)^{\frac{d}{2}} \Gamma\lk\sigma+\frac{d}{2}+1\rk}.
\end{displaymath}
This result is sharp as well in the sense that the bound captures the first term of the high energy asymptotics
\begin{equation*}
R_\sigma(\La) \,= \,L^{cl}_{\sigma,d} \,\vo \,\La^{\sigma + \frac{d}{2}}  + o \lk \La^{\sigma+ \frac{d}{2}} \rk
\quad\mbox{as}\quad \Lambda\to +\infty.
\end{equation*}
Via Laplace transformation - and reversely via Tauberian theorems -
this asymptotic formula is closely connected with \eqref{ztasymp}.
%This semiclassical limit was first considered by H. Weyl, \cite{Wey}, and is again connected with the asymptotic expansion of %$Z(t)$ mentioned above. In fact, the latter is an ingredient to prove a refinement of the Weyl asyptotics, namely that
%\begin{equation*} \label{asympt}
%R_\sigma(\La) \,= \,L^{cl}_{\sigma,d} \,\vo \,\La^{\sigma + \frac{d}{2}} - 1/4 \,L^{cl}_{\sigma,d-1} \,\left| \partial \Omega \right| \,\La^{\sigma + \frac{d-1}{%2}} + o \lk \La^{\sigma+ \frac{d-1}{2}} \rk
%\end{equation*}
On the level of uniform inequalities one can deduce Kac' inequality on $Z(t)$ from 
Berezin-Li-Yau bounds. Reversely, to recover sharp Berezin-Li-Yau bounds from Kac' inequality
one needs some additional information. For example, in \cite{HaHe01} Harrell and Hermi formally  deduced
Berezin-Li-Yau bounds for $\sigma\geq 2$ from Kac' inequality
based on a monotonicity result by Harrell and Stubbe.\footnote{One should mention, that in fact,
due to Weyl's asymptotic law, the monotonicity result implies sharp Berezin-Li-Yau bounds for $\sigma\geq 2$ on its own.}
Similar arguments fail for $\sigma<2$. 

While both \eqref{beliyau} and \eqref{basic} are sharp in the sense that they capture the main asymptotic
behaviour and therefore constants in these inequalities cannot be improved, one can expect that more subtle
bounds might invoke additional lower order correction terms. 
Indeed, we know that under certain conditions on the  geometry of $\Omega$ the asymptotics
\begin{equation*} \label{asympt2term}
R_\sigma(\La) \,= \,L^{cl}_{\sigma,d} \,\vo \,\La^{\sigma + \frac{d}{2}} - \frac14 \,L^{cl}_{\sigma,d-1} \,\left| \partial \Omega \right| \,\La^{\sigma + \frac{d-1}{2}} + o \lk \La^{\sigma+ \frac{d-1}{2}} \rk
\end{equation*}
holds true as $\La \to \infty$, see \cite{Ivr}. Recently there have been several results on semciclassical inequalities improving (\ref{beliyau}) with negative correction terms of lower order, reflecting the effect of the second term of the asymptotics, see \cite{Mel,W01,KVW}, and \cite{FLU} for discrete operators. 

Let us first point out a result of Melas. In \cite{Mel} he effectively showed that\footnote{This inequality is in fact the
Legendre transform of Melas' result.}
\begin{equation}\label{Melas}
R_1(\La) \,\leq \,L^{cl}_{1,d} \,\vo \lk \La - M_d \frac{\vo}{I(\Omega)} \rk^{1+\frac{d}{2}}_+
\end{equation}
holds for $\La > 0$. Again applying Laplace transformation Harrell and Hermi deduced
an improvement of Kac' inequality \cite{HaHe01}
\begin{equation}\label{MelKac}
Z(t) \,\leq \,\frac{\vo}{(4 \pi t)^{\frac{d}{2}}} \,\exp \lk - M_d \frac{\vo}{I(\Omega)} \,t \rk,
\end{equation}
where $I(\Omega) = \min_{a \in \R^d} \int_\Omega \left| x-a \right|^2 dx$ and $M_d$ is a constant depending only on the dimension. This improvement holds true for all $t > 0$ and any open set $\Omega$ with finite volume - without any conditions on the
boundary $\partial\Omega$. These authors conjecture also that \eqref{MelKac} can be improved to
\begin{equation} \label{HH}
Z(t) \,\leq \,\frac{\vo}{\left( 4 \pi t \right)^{\frac{d}{2}}} \,\exp \lk -\frac{t}{\vo^{\frac2d}} \rk
\end{equation}
for all $t>0$ and all open sets $\Omega$ of finite volume. Asymptotic considerations show that this conjecture is plausible for small $t$ as well as for large $t$. However, one should mention, that neither the correction term
in \eqref{Melas} is of the expected order for high energies, nor is the improvement \eqref{MelKac} or even
the conjecture \eqref{HH} of correct order for small $t>0$.

To derive universal bounds on $Z(t)$ like \eqref{HH} depending only on the volume of $\Omega$ and not including any further geometrical information one can employ an isoperimetric result due to Luttinger \cite{Lut}. He shows that Steiner-symmetrization of an open set $\Omega$ increases the trace of the heat kernel in this set. Thus for any open set $\Omega \subset \R^d$ with finite volume the inequality
\begin{equation} \label{Luttinger}
Z(t) \, \leq \, Z^*(t)
\end{equation}
holds true for all $t > 0$, where $Z^*(t)$ denotes the trace of the heat kernel in the ball $B \subset \R^d$ with the same volume as $\Omega$.

Here we prove a refined universal bound on $Z(t)$ reflecting the correct asymptotic properties. To this end we shall follow the approach in \cite{W01}. There a Berezin-Li-Yau type bound on $R_\sigma$ for $\sigma\geq 3/2$ 
with a correction term of the expected order has been found, see inequality \eqref{bly2} below. Using the same method we prove a refined Berezin-Li-Yau inequality, see Proposition \ref{blysum}, that gives rise to an improved bound on $Z(t)$ applicable to any open set $\Omega$ with finite volume. This bound decays exponentially as $t$ tends to infinity and contains a negative correction term of correct order as $t$ tends to zero.

Moreover, we can consider unbounded domains $\Omega \subset \R^d$ with infinite volume. While the results of Kac and Luttinger must fail for such domains, we show that under appropriate conditions on $\Omega$ our refined inequalities can still be applied and give order-sharp upper bounds.

This paper is structured as follows: In section \ref{mainres} we state the main results.
Then in section \ref{notation} we provide some auxiliary notation and auxiliary results including improved Berezin-Li-Yau inequalities.  
In section \ref{secthmheat} we prove Theorem \ref{thmheat} and compare this result to other bounds on $Z(t)$. In section \ref{infvol} we discuss some applications to unbounded domains and domains with infinite volume. Finally, in section \ref{secthmriesz} we apply a method by M. Aizenmann and E. H. Lieb \cite{AiLi} to the results from section \ref{notation} in order to prove refined bounds on the eigenvalue means $R_\sigma(\La)$.

We thank Rupert L. Frank for helpful discussions and in particular for indicating the result of J. M. Luttinger.

\section{Main Results}\label{mainres}

To state the main result we have to introduce some auxiliary notation. Let $\Gamma(z)$ be the usual Gamma-function
and by
\begin{equation*}
\G(z,s_1,s_2) = \int_{s_1}^{s_2} s^{z-1} e^{-s} ds / \Gamma(z) 
\end{equation*}
we denote normed incomplete Gamma-functions. If $s_1=0$ we write $\G(z,s) = \G(z,0,s)$
and $\hat\Gamma(z,s)=1-\G(z,s) = \G(z,s,+\infty)$. 
Note that for $a > 0$ we have 
\begin{equation}\label{incplgamma}
\G (a,t) = \frac{t^a}{a \ \Gamma(a)} + O \lk t^{a+1} \rk\quad\mbox{as}\quad t \to 0+ \ \mbox{and}
\end{equation}
\begin{equation}\label{infgamma}
\hat \Gamma (a,t) = \frac{t^{a-1}}{\Gamma(a)} \exp(-t) + O \lk t^{a-2} \exp(-t) \rk \quad \mbox{as} \quad t \rightarrow \infty.
\end{equation}
Furthermore, let $B(\alpha,\beta)$ be the usual Beta-function.
By 
\begin{equation*}
\B(s_1,s_2,\alpha,\beta ) =\int_{s_1}^{s_2} s^{\alpha-1} (1-s)^{\beta-1} ds / B(\alpha,\beta)
\end{equation*}
we denote normed incomplete Beta-functions and for $s_1=0$ we write in short $\B (s,\alpha,\beta) = \B(0,s,\alpha,\beta)$ and $\hat B (s,\alpha,\beta) = 1 - \B (s,\alpha,\beta) = \B (s,1,\alpha,\beta)$.
Note that for $\alpha, \beta > 0$ we have 
\begin{equation}\label{incplbeta}
B (0, t, \alpha, \beta) = \frac{1}{\alpha} t^\alpha + O \lk t^{\alpha+1} \rk\quad\mbox{as} \quad t \rightarrow 0+ \, . 
\end{equation}
Next we remark that in view of the isoperimetric inequality by Rayleigh, Faber and Krahn \cite{Fab,Kra} on the ground state $\lambda_1$
we can always choose  
\begin{equation} \label{rafakr}
\tilde \lambda=\frac{\pi}{\Gamma \lk \frac{d}{2}+1 \rk^{2/d}} \,\frac{j_{\frac{d}{2}-1,1}^2}{\vo^{2/d}} \, \leq \, \lambda_1 \, 
\end{equation}
as a lower bound on $\lambda_1$, where $j_{k,1}$ denotes the first zero of the Bessel-function $J_k$.

For $r\in\mathbb R$ put $(r)_+=\max\{r,0\}$ and for $d\in\mathbb N$ let
\begin{equation} \label{sigma}
\s_d \,= \,\left\{ \begin{array}{ccl} 5/2 & \textnormal{if} & d = 2 \\ 2 & \textnormal{if} & d = 3 \\ 3/2 & \textnormal{if} & d \geq 4 \end{array} \right. \,.
\end{equation}
Finally, let $\Omega \subset \mathbb{R}^d$ be an arbitrary open set with finite volume $\vo$.

\begin{thm} \label{thmheat}
Let $\lambda \in [ \tilde \lambda,\lambda_1]$.
For any $t > 0$ the bound
\begin{displaymath}
Z(t) \, \leq \,  \frac{\vo}{(4 \pi  t)^{\frac{d}{2}}} 
\hat\Gamma \lk \s_d + \frac{d}{2} + 1 , \lambda  t \rk 
\, - \, (R(t,\lambda))_+
\end{displaymath}
holds true with a remainder term
\begin{align*}
R(t) = c_{1,d} \, \frac{\vo^{\frac{d-1}{d}}}{(4 \pi t)^{\frac{d-1}{2}}} \, \hat \Gamma \lk \s_d + \frac{d+1}{2}, \lambda t \rk - c_{2,d} \,  \frac{\vo^{\frac{d-3}{d}}}{(4 \pi t)^{\frac{d-3}{2}}} \, \hat \Gamma \lk \s_d + \frac{d-1}{2}, \lambda t \rk \, ,
\end{align*}
where 
\begin{align*}
c_{1,d} &= \frac{B \lk \frac 12, \s_d + \frac{d+1}{2} \rk}{2}  \frac{\Gamma \lk \frac d2 +1 \rk^{\frac{d-1}{d}}}{\Gamma \lk \frac{d+1}{2} \rk} \quad \mbox{and} \\ 
c_{2,d} &=  \frac{\pi^2 (d-1)B \lk \frac 12, \s_d + \frac{d+1}{2} \rk}{96 (2\s_d+d-1)}  \frac{\Gamma \lk \frac d2 +1 \rk^{\frac{d-3}{d}}}{\Gamma \lk \frac{d+1}{2} \rk} \, .
\end{align*}
\end{thm}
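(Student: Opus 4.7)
The plan is to derive Theorem \ref{thmheat} from the refined Berezin-Li-Yau inequality of Proposition \ref{blysum} by Laplace-transforming the bound on the Riesz means $R_{\sigma_d}(\Lambda)$ into a bound on $Z(t)$. The key elementary identity is
\[
e^{-\lambda_k t} \,= \,\frac{t^{\sigma+1}}{\Gamma(\sigma+1)} \int_0^\infty (\Lambda - \lambda_k)_+^\sigma \, e^{-\Lambda t} \, d\Lambda, \qquad \sigma > -1,
\]
which, after summation over $k$ and interchange of sum and integral, yields
\[
Z(t) \,= \,\frac{t^{\sigma+1}}{\Gamma(\sigma+1)} \int_0^\infty R_\sigma(\Lambda) \, e^{-\Lambda t} \, d\Lambda.
\]
Since $R_\sigma$ vanishes on $[0,\lambda_1]$ and the hypothesis $\lambda \in [\tilde\lambda,\lambda_1]$ forces $\lambda \leq \lambda_1$, the lower limit of the integral may be raised to $\lambda$.

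Next, I specialize $\sigma = \sigma_d$ and insert the refined Berezin-Li-Yau bound from Proposition \ref{blysum}. Beyond the leading Weyl-type term $L^{cl}_{\sigma_d,d}|\Omega|\Lambda^{\sigma_d+d/2}$, that bound furnishes a negative correction proportional to $|\Omega|^{(d-1)/d}\Lambda^{\sigma_d+(d-1)/2}$ (with a coefficient involving $B(\tfrac12,\sigma_d+\tfrac{d+1}{2})$) and a smaller positive term proportional to $|\Omega|^{(d-3)/d}\Lambda^{\sigma_d+(d-3)/2}$. Each of the resulting three $\Lambda$-integrals is evaluated by the substitution $u = \Lambda t$ using
\[
\int_\lambda^\infty \Lambda^{a} \, e^{-\Lambda t} \, d\Lambda \,= \,\frac{\Gamma(a+1)}{t^{a+1}} \, \hat\Gamma(a+1,\lambda t),
\]
and the three choices $a = \sigma_d + d/2$, $a = \sigma_d + (d-1)/2$, $a = \sigma_d + (d-3)/2$ produce exactly the three incomplete Gamma factors appearing in the statement. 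Combining with the overall prefactor $t^{\sigma_d+1}/\Gamma(\sigma_d+1)$ and using the classical value $L^{cl}_{\sigma_d,d} = \Gamma(\sigma_d+1)/((4\pi)^{d/2}\Gamma(\sigma_d+d/2+1))$, the Gamma-function factors of the leading contribution collapse and reproduce $|\Omega|(4\pi t)^{-d/2}\hat\Gamma(\sigma_d+d/2+1,\lambda t)$; the two correction integrals are handled analogously and yield the constants $c_{1,d}$ and $c_{2,d}$ stated in the theorem.

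The positive-part in $(R(t,\lambda))_+$ reflects the fact that Proposition \ref{blysum} is itself a bound of the shape $R_{\sigma_d}(\Lambda) \leq L^{cl}_{\sigma_d,d}|\Omega|\Lambda^{\sigma_d+d/2} - (\text{correction}(\Lambda))_+$: whenever the polynomial correction would become negative, as happens for small $\Lambda$, it is simply discarded, and this is inherited by the Laplace transform because $e^{-\Lambda t}$ is a positive kernel. The principal obstacle I expect is not conceptual but computational: the Beta- and Gamma-function factors delivered by Proposition \ref{blysum}, once multiplied by $t^{\sigma_d+1}/\Gamma(\sigma_d+1)$ and by the Gamma-function factors arising from the substitution $u = \Lambda t$, must collapse exactly to the constants $c_{1,d}$ and $c_{2,d}$ as written. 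All the genuine spectral content is absorbed into Proposition \ref{blysum}; granted that refined inequality, Theorem \ref{thmheat} reduces to a careful Laplace transformation and algebraic bookkeeping.
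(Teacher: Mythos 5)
Your Laplace-transform machinery is the right frame (it matches the paper's Section 4: specialize to $\sigma=\sigma_d$, use the reduced transform starting at $\lambda\leq\lambda_1$, and convert each power $\Lambda^a$ into $\Gamma(a+1)t^{-a-1}\hat\Gamma(a+1,\lambda t)$), but there is a genuine gap at the heart of the argument: Proposition \ref{blysum} does \emph{not} furnish the two polynomial correction terms you attribute to it. For $\sigma=\sigma_d$ one has $\delta_{\sigma_d,d}=0$, so the proposition reads $R_{\sigma_d}(\Lambda)\leq L^{cl}_{\sigma_d,d}\left(\vo-M\left(\pi/\sqrt{\Lambda};\Omega\right)\right)\Lambda^{\sigma_d+d/2}$, with a correction given by the domain-dependent quantity $M$, not by explicit powers of $\vo$. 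Turning this into the universal constants $c_{1,d}$ and $c_{2,d}$ is precisely the content you have skipped. The paper does it by (i) invoking Luttinger's symmetrization \eqref{Luttinger} to replace $\Omega$ by the ball $B_R$ of the same volume, (ii) computing $M(y;B_R)=\vo\,\B\left(y^2/(4R^2),\tfrac12,\tfrac{d+1}{2}\right)$ explicitly (Proposition \ref{proheat}), and (iii) lower-bounding the incomplete Beta function via $(1-u)^{(d-1)/2}\geq 1-\tfrac{d-1}{2}u$, which is what generates the two terms of orders $\vo^{(d-1)/d}$ and $\vo^{(d-3)/d}$ together with the factors $\Gamma(\tfrac d2+1)^{(d-1)/d}/\Gamma(\tfrac{d+1}{2})$ and $\Gamma(\tfrac d2+1)^{(d-3)/d}/\Gamma(\tfrac{d+1}{2})$ coming from the radius formula \eqref{radius}. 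Those factors are a fingerprint of the ball's geometry; the only general-domain lower bound available, Lemma \ref{mest}, gives $M(y;\Omega)\geq\min(\vo/d,\vo^{(d-1)/d}y)$, which yields neither the stated value of $c_{1,d}$ nor any second correction term (this route leads instead to Theorem \ref{thmriesz} with its case distinction at $\tau_\Omega$).

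A secondary inaccuracy: the positive part $(R(t,\lambda))_+$ does not arise because Proposition \ref{blysum} carries a positive part (it does not); it is justified by observing that the cruder bound \eqref{Zlargetime}, with no remainder at all, is always available, so the remainder may be replaced by $\max(R,0)$. In short, the Laplace-transform bookkeeping you describe is sound, but all of the work specific to this theorem — symmetrization to the ball, the explicit evaluation of $M$ there, and the Beta-function expansion — is absent from your proposal.
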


\begin{rem}
Because of \eqref{incplgamma} Theorem \ref{thmheat} can then be read as
\begin{equation} \label{heatkurz}
Z(t) \, \leq \, \frac{\vo}{(4 \pi  t)^{\frac{d}{2}}} - c_{1,d} \, \frac{\vo^{\frac{d-1}{d}}}{(4 \pi t)^{\frac{d-1}{2}}}  -  r(t)
\end{equation}
with an explicit remainder term $r(t)= O ( t^{-\frac{d-3}{2}})$  as $t \to 0+$. We note that the bound captures the main asymptotic behaviour of $Z(t)$ as $t$ tends to zero: The first term equals the leading term of the short time asymptotics of $Z(t)$ and the second term shows the correct order in $t$ compared with the second term of the asymptotic expansion.

Moreover, note that in view of (\ref{infgamma}) the bound from Theorem \ref{thmheat} decays exponentially as $t$ tends to infinity. More precisely, it follows that the bound is of order $O ( t^{\s_d+1} \exp (- \tilde \lambda \, t ) )$ as $t \rightarrow \infty$. 
\end{rem}

\begin{rem}
If we choose $\lambda = \tilde \lambda$ introduced in (\ref{rafakr}) we arrive at a universal upper bound on $Z(t)$ depending only on $\vo$ and not including any explicit information on $\lambda_1$. For the explicit statement see Corollary  \ref{corheat} in section \ref{secthmheat}. This result implies the conjectured inequality \eqref{HH} for dimensions $d \leq 633$.
\end{rem}

As stated above, our proof of Theorem \ref{thmheat} relies on improved bounds for Riesz means
of eigenvalues. Let us state the corresponding result.

\begin{thm} \label{thmriesz}
Let $\lambda \in [\tilde \lambda,\lambda_1]$ and $\sigma > \s_d$ and put $\tau_\Omega = \frac{\pi^2 d^2}{\vo^{\frac 2d}}$. Then the estimate
\begin{displaymath}
R_\sigma ( \La) \,\leq \,L^{cl}_{\sigma,d} \,\vo \, \hat B \lk  \frac{\lambda}{\La}, \s_d + \frac{d}{2}+1 ,\sigma-\s_d \rk \,  \La^{\sigma+\frac{d}{2}} \,- \,
(S(\La,\lambda))_+ 
\end{displaymath}
holds true for all $\La \geq \lambda$, where
\begin{equation} \label{a1}
S(\La,\lambda) \,= \,L^{cl}_{\sigma,d-1} \,\vo^{\frac{d-1}{d}} \,\La^{\sigma+\frac{d-1}{2}} \frac{B \lk \frac{1}{2}, \s_d + \frac{d+1}{2} \rk}{2}   \hat B \lk \frac{\lambda}{\La} , \s_d + \frac{d+1}{2}, \sigma - \s_d \rk 
\end{equation}
if $\lambda \geq \tau_\Omega$ ,
\begin{equation} \label{a2}
S(\La,\lambda) \,= \,L^{cl}_{\sigma,d} \,\vo \,\La^{\sigma+\frac{d}{2}} \, \frac{1}{d} \, \hat B \lk \frac{\lambda}{\Lambda},
\s_d + \frac{d}{2}+1 , \sigma - \s_d \rk 
\end{equation}
if $\lambda < \tau_\Omega$ and $\La < \tau_\Omega$, or
\begin{align}
S(\La,\lambda) = & \, L^{cl}_{\sigma,d-1} \,\vo^{\frac{d-1}{d}}  \La^{\sigma+\frac{d-1}{2}} \frac{B \lk \frac{1}{2}, \s_d + \frac{d+1}{2} \rk }{2} \hat B \lk \frac{\tau_\Omega}{\La} , \s_d + \frac{d+1}{2}, \sigma - \s_d \rk \nonumber \\
& \ + \,L^{cl}_{\sigma,d} \, \vo \, \La^{\sigma+\frac{d}{2}} \, \frac{1}{d} \,\B \lk \frac{\lambda}{\Lambda}, \frac{\tau_\Omega}
{\La},\s_d + \frac{d}{2}+1 , \s - \s_d \rk \label{a3}\,,
\end{align}
if $\lambda < \tau_\Omega$ and $\La \geq \tau_\Omega$.
\end{thm}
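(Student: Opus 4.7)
The plan is to derive Theorem \ref{thmriesz} from the refined Berezin--Li--Yau inequality for $R_{\s_d}$ provided by Proposition \ref{blysum} via the Aizenmann--Lieb averaging trick. Starting from the elementary identity
\[
(a)_+^{\s_d+\gamma} \,= \,\frac{1}{B(\gamma,\s_d+1)} \int_0^\infty s^{\gamma-1}(a-s)_+^{\s_d}\,ds,\qquad \gamma>0,
\]
applied pointwise to $a=\La-\lambda_k$ and summed over $k$ via Fubini, one obtains for every $\sigma>\s_d$
\[
R_\sigma(\La) \,= \,\frac{1}{B(\sigma-\s_d,\s_d+1)}\int_0^\La (\La-\mu)^{\sigma-\s_d-1} R_{\s_d}(\mu)\,d\mu.
\]
Since $\lambda\leq\lambda_1$, the integrand vanishes on $[0,\lambda]$, and on $[\lambda,\La]$ one inserts the upper bound from Proposition \ref{blysum}. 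The remaining task is the evaluation of the resulting $\mu$-integrals.

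The principal contribution $L^{cl}_{\s_d,d}\,\vo\,\mu^{\s_d+d/2}$ of Proposition \ref{blysum}, after the substitution $\mu=\La t$, becomes
\[
L^{cl}_{\s_d,d}\,\vo\,\La^{\sigma+\frac d2}\int_{\lambda/\La}^1 t^{\s_d+\frac d2}(1-t)^{\sigma-\s_d-1}\,dt.
\]
Dividing by $B(\sigma-\s_d,\s_d+1)$ and applying $B(a,b)=\Gamma(a)\Gamma(b)/\Gamma(a+b)$, the Gamma-factors telescope to produce exactly $L^{cl}_{\sigma,d}\,\vo\,\hat B(\lambda/\La,\s_d+\frac d2+1,\sigma-\s_d)\,\La^{\sigma+d/2}$, which matches the advertised leading term. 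The correction terms are processed by the very same integral identity: a factor $\mu^{\s_d+(d-1)/2}$ produces $\La^{\sigma+(d-1)/2}$ together with an incomplete Beta function with first parameter $\s_d+(d+1)/2$, while a factor $\mu^{\s_d+d/2}$ reproduces the same incomplete Beta function as in the main term.

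The three-case structure of the conclusion mirrors that of Proposition \ref{blysum}, whose correction switches shape at the threshold $\mu=\tau_\Omega$: it is of surface type $\vo^{(d-1)/d}\mu^{\s_d+(d-1)/2}$ for $\mu\geq\tau_\Omega$ and of volume type $\vo\,\mu^{\s_d+d/2}$ for $\mu<\tau_\Omega$. When $\lambda\geq\tau_\Omega$ the whole range $[\lambda,\La]$ lies above the threshold and only the surface term contributes, giving \eqref{a1}; when $\La<\tau_\Omega$ only the volume term contributes, giving \eqref{a2}; in the mixed case $\lambda<\tau_\Omega\leq\La$ the $\mu$-integral must be split at $\mu=\tau_\Omega$, producing the surface $\hat B$-piece on $[\tau_\Omega/\La,1]$ together with the volume $\B$-piece on $[\lambda/\La,\tau_\Omega/\La]$ that appear in \eqref{a3}.

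The principal obstacle is the Gamma--Beta bookkeeping required for the constants of the correction terms to collapse cleanly into $L^{cl}_{\sigma,d-1}$ and $L^{cl}_{\sigma,d}$ and for the prefactor $B(\frac12,\s_d+\frac{d+1}{2})/2$ to reappear unchanged from the surface piece of Proposition \ref{blysum}. A subordinate but essential point is that the Proposition \ref{blysum} correction can, in unfavourable parameter ranges, exceed the leading term; hence the total correction $S(\La,\lambda)$ in the final bound is wrapped in a positive part, so that one never subtracts more than the Aizenmann--Lieb integration actually certifies as an improvement.
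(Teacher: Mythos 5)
Your route is the paper's route: average the $\sigma=\s_d$ case of Proposition \ref{blysum} (where $\delta_{\s_d,d}=0$, by Lemma \ref{lemeps} since $\s_d+\frac{d-1}{2}\ge 3$) through the Aizenmann--Lieb identity, truncate the integration at $\lambda\le\lambda_1$, split at $\mu=\tau_\Omega$, and recover the constants by telescoping Beta--Gamma factors; your bookkeeping for the main term and both correction terms checks out, and the positive part is justified exactly as in the paper by falling back on the Aizenmann--Lieb average of the plain Berezin--Li--Yau bound. The paper merely packages the averaging step as Lemma \ref{rieszlem} and substitutes $y=\pi/\sqrt{\La-\tau}$ before estimating; that is cosmetic.

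One step, however, is asserted rather than proved. The two-regime correction you describe --- surface type $\vo^{\frac{d-1}{d}}\mu^{\s_d+\frac{d-1}{2}}$ for $\mu\ge\tau_\Omega$ and volume type $\frac{\vo}{d}\,\mu^{\s_d+\frac d2}$ for $\mu<\tau_\Omega$ --- is \emph{not} part of Proposition \ref{blysum}. That proposition yields the domain-dependent correction $-L^{cl}_{\s_d,d}\,M\lk\pi/\sqrt{\mu};\Omega\rk\mu^{\s_d+\frac d2}$ (after using \eqref{mvolume}); to obtain the universal dichotomy with threshold $\tau_\Omega$ one must invoke the lower bound $M(y;\Omega)\ge\min\lk\vo/d,\,\vo^{\frac{d-1}{d}}y\rk$ of Lemma \ref{mest}, whose min switches precisely at $y=\vo^{1/d}/d$, i.e.\ at $\mu=\tau_\Omega$. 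This lemma (resting on the rearrangement argument of Lemma \ref{mest2}) is a genuinely nontrivial ingredient and cannot be read off Proposition \ref{blysum}; with that citation supplied your argument closes and coincides with the paper's.
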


\begin{rem}
Again we can choose $\lambda$ as in \eqref{rafakr} and we arrive at a universal bound depending only on $\vo$.
\end{rem}

\begin{rem}
In view of (\ref{incplbeta}) Theorem \ref{thmriesz} can be read as
\begin{equation*} 
R_\sigma (\La) \, \leq \,L^{cl}_{\sigma,d} \, \vo \, \La^{\sigma+\frac{d}{2}}  - \frac12  \, B \lk \frac12 ,\sigma_d+\frac{d+1}{2} \rk \, L^{cl}_{\sigma,d} \, \vo^{\frac{d-1}{d}} \, \La^{\sigma+\frac{d-1}{2}}  +   s(\La) 
\end{equation*}
with an explicit remainder term $s(\La) = O \lk \La^{-1} \rk$ as $\La \rightarrow \infty$.
\end{rem}

\section{Notation and auxiliary results} \label{notation}
Fix a Cartesian coordinate system in $\mathbb{R}^d$ and write 
$x = (x',x_d) \in \mathbb{R}^{d-1} \times \mathbb{R}$ for $x \in \mathbb{R}^d$. 
For a given $\Lambda > 0$ define
$$l_\Lambda \, = \, \pi\Lambda^{-\frac{1}{2}}.$$

Now consider an open set $\Omega\subset\mathbb{R}^d$.
Each section $\Omega(x') = \{x_d \in \mathbb{R} \, : \, (x',x_d) \in \Omega \}$ 
is a one-dimensional open set and
consists of at most countably many open disjoint intervals $J_k(x')$, $k =1,\dots,N(x') \leq \infty$. 
Let $\kappa(x',\Lambda) \subset \mathbb{N}$ be the subset of all those indices $k$, 
for which the corresponding interval $J_k(x')$ is strictly longer than $l_\Lambda$. 
The number these indices is denoted by $\chi(x',\Lambda)$. 
Put
\begin{equation*}
\Omega_\Lambda(x') = \bigcup_{k \in \kappa(x',\Lambda)} J_k(x') \quad\mbox{and}\quad
\Omega_\La =  \bigcup_{x' \in \mathbb{R}^{d-1}} \{ x' \} \times \Omega_\La(x')\,.
\end{equation*}
Obviously
the set $\Omega_\La$ is the subset of $\Omega$, where $\Omega$ is 
"wide enough" in $x_d$-direction. The quantity
\begin{equation*}
d_\La(\Omega) = \int_{\mathbb{R}^{d-1}} \chi\left(x',\La \right) dx'
\end{equation*}
is an effective area of the projection of $\Omega_\La$ onto the $d-1$-dimensional
hyperplane $(x',0)$ counting also the multiplicities of the sufficiently
long intervals $J_k(x')$.

Moreover, for $\mu \geq 2$ put
\begin{equation} \label{eps}
\varepsilon(\mu) = \inf_{A \geq 1} \left( \int_0^A \textstyle \left( 1- \frac{t^2}{A^2} \right)^\mu_+ \displaystyle dt - \sum_{k\geq 1} \textstyle \left(1-\frac{k^2}{A^2} \right)^\mu_+ \displaystyle \right) > 0 \, .
\end{equation}
We are now in the position to state the improved Berezin-Li-Yau bound from \cite{W01}:

\begin{pro} \label{BLYW}
For any open domain $\Omega \subset \mathbb{R}^d$, $\sigma \geq 3/2$ and all $\Lambda > 0$ the bound 
\begin{equation} \label{bly2} 
R_\sigma(\La) \,\leq  \,L^{cl}_{\sigma,d} \,\left| \Omega_\Lambda \right| \,\Lambda^{\sigma+\frac{d}{2}} \,- \,\varepsilon \textstyle \lk \sigma + \frac{d-1}{2}  \rk \displaystyle \,L^{cl}_{\sigma, d-1} d_\Lambda(\Omega) \,\Lambda^{\sigma+\frac{d-1}{2}}
\end{equation}
holds true. 
\end{pro}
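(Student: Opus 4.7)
The plan is to adapt the strategy of \cite{W01}: combine the Laptev--Weidl operator-valued Lieb--Thirring inequality in the $d-1$ transverse variables with a sharp one-dimensional defect estimate on each fibre of $\Omega$.

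First I split $x=(x',x_d)\in\R^{d-1}\times\R$ and regard $-\Delta_\Omega$ as a Schr\"odinger-type operator of the form $-\Delta_{x'}+h(x')$, where $h(x')$ is the one-dimensional Dirichlet Laplacian on the slice $\Omega(x')$ (and is identically $+\infty$ for $x'$ outside the projection of $\Omega$). For $\sigma\geq 3/2$ the sharp operator-valued Lieb--Thirring inequality of Laptev and Weidl yields
\begin{equation*}
R_\sigma(\La)\;\leq\;L^{cl}_{\sigma,d-1}\int_{\R^{d-1}}\textnormal{Tr}\bigl(h(x')-\La\bigr)_-^{\sigma+(d-1)/2}\,dx'.
\end{equation*}

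Next, for each $x'$ the operator $h(x')$ splits orthogonally over the components $J_k(x')$; the $n$-th Dirichlet eigenvalue on an interval of length $\ell$ is $\pi^2 n^2/\ell^2$. Intervals with $\ell\leq l_\La$ have all their spectrum above $\La$ and contribute nothing. For a long interval ($\ell>l_\La$), set $A=\ell\La^{1/2}/\pi>1$ and $\mu=\sigma+(d-1)/2$. The trace equals $\La^\mu\sum_{k\geq 1}(1-k^2/A^2)_+^\mu$, while the matching classical integral is $\La^\mu\int_0^A(1-t^2/A^2)^\mu\,dt=L^{cl}_{\mu,1}\,\ell\,\La^{\mu+1/2}$. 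Since $\sigma\geq 3/2$ and $d\geq 2$ imply $\mu\geq 2$, definition \eqref{eps} of $\varepsilon(\mu)$ gives
\begin{equation*}
\textnormal{Tr}\bigl(-\partial_{x_d}^2|_{J_k(x')}-\La\bigr)_-^\mu\;\leq\;L^{cl}_{\mu,1}\,\ell\,\La^{\mu+1/2}-\varepsilon(\mu)\La^\mu.
\end{equation*}

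Summing over $k\in\kappa(x',\La)$ replaces the individual lengths by $|\Omega_\La(x')|$ and the count of $1$'s by $\chi(x',\La)$; integrating in $x'$ produces $|\Omega_\La|$ and $d_\La(\Omega)$ respectively. A direct Gamma-function calculation yields the identity $L^{cl}_{\sigma,d-1}\,L^{cl}_{\sigma+(d-1)/2,1}=L^{cl}_{\sigma,d}$, which assembles the constants of the classical term and delivers the claimed bound. The main obstacle is the first step: the sharp operator-valued Lieb--Thirring inequality is the deep input and is precisely what forces $\sigma\geq 3/2$; applying it to the Dirichlet Laplacian on an arbitrary open set proceeds by approximating the infinite confining potential $+\infty\cdot\chi_{\Omega^c}$ by a finite one and passing to the limit. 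Everything after this reduction is bookkeeping driven by the sharp 1D defect $\varepsilon(\mu)$ that measures the excess of the phase-space integral over its lattice sum approximation.
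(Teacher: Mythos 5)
Your proof is correct and follows essentially the same route as the paper (which quotes Proposition \ref{BLYW} from \cite{W01} and reproduces the argument, in slightly modified form, in its proof of Proposition \ref{blysum}): reduce to the operator-valued Schr\"odinger operator $-\Delta'\otimes\mathbb{I}-W(x',\La)$ via the variational principle, apply the sharp Laptev--Weidl inequality for $\sigma\geq 3/2$, and then use the definition of $\varepsilon(\mu)$ with $\mu=\sigma+\frac{d-1}{2}\geq 2$ on each sufficiently long interval $J_k(x')$ separately, which is exactly what produces $|\Omega_\La|$ and $d_\La(\Omega)$ after summing and integrating. All the constants check out, including $L^{cl}_{\sigma,d-1}L^{cl}_{\sigma+(d-1)/2,1}=L^{cl}_{\sigma,d}$.
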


Let us state also the following result on the explicit values of $\varepsilon(\mu)$.

\begin{lem} \label{lemeps}
For all $\mu \geq 3$ we have
\begin{equation*}
\varepsilon\lk \mu \rk \,= \,\frac12  \, B \lk \frac12 , \mu +1 \rk.
\end{equation*} 
\end{lem}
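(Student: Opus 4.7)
The plan has two directions. Setting
$$\Phi_\mu(A) \,:=\, \int_0^A (1-t^2/A^2)^\mu\, dt \,-\, \sum_{k \geq 1}(1-k^2/A^2)_+^\mu,$$
so that $\varepsilon(\mu) = \inf_{A \geq 1}\Phi_\mu(A)$, the upper bound $\varepsilon(\mu) \leq \tfrac{1}{2} B(\tfrac{1}{2}, \mu+1)$ is immediate by evaluating at $A = 1$: the only surviving summand is $(1-1)^\mu = 0$, and the substitution $u = t^2$ reduces $\int_0^1 (1-t^2)^\mu\, dt$ to the Euler integral $\tfrac{1}{2} B(\tfrac{1}{2}, \mu+1)$.

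For the matching lower bound one must show $\Phi_\mu(A) \geq \Phi_\mu(1)$ for every $A \geq 1$ and every $\mu \geq 3$. The key tool is a differential recurrence linking consecutive levels of $\mu$: differentiating $\Phi_\mu$ in $A$ termwise, evaluating $\int_0^A t^2(1-t^2/A^2)^{\mu-1}\, dt = \tfrac{A^3}{4\mu} B(\tfrac{1}{2}, \mu+1)$ (again by $u = t^2$), and using the elementary identity $(1-x)^{\mu-1}x = (1-x)^{\mu-1} - (1-x)^\mu$, one obtains
$$A\,\Phi_\mu'(A) \,=\, 2\mu\,\bigl[\Phi_{\mu-1}(A) - \Phi_\mu(A)\bigr].$$
At $A = 1$ this gives $\Phi_\mu'(1) = \mu[B(\tfrac{1}{2},\mu) - B(\tfrac{1}{2}, \mu+1)] > 0$, so $\Phi_\mu$ strictly increases past $A = 1$.

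Argue by contradiction: if $\Phi_\mu(A) < \Phi_\mu(1)$ for some $A > 1$, let $A^* := \inf\{A > 1: \Phi_\mu(A) \leq \Phi_\mu(1)\} > 1$. Then $\Phi_\mu(A^*) = \Phi_\mu(1)$ by continuity and the left derivative satisfies $\Phi_\mu'(A^*) \leq 0$, so the recurrence forces
$$\Phi_{\mu-1}(A^*) \,\leq\, \Phi_\mu(A^*) \,=\, \tfrac{1}{2} B(\tfrac{1}{2}, \mu+1).$$
Thus the proof reduces to the pointwise estimate $\Phi_{\mu-1}(A) > \tfrac{1}{2} B(\tfrac{1}{2}, \mu+1)$ for all $A \geq 1$. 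For integer $\mu \geq 4$ the inductive hypothesis yields $\Phi_{\mu-1}(A) \geq \Phi_{\mu-1}(1) = \tfrac{1}{2} B(\tfrac{1}{2}, \mu) > \tfrac{1}{2} B(\tfrac{1}{2}, \mu+1)$ as required.

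The base case $\mu = 3$ and the non-integer range $\mu \in (3,4)$ need direct treatment. On $[1,2)$ only the $k = 1$ term survives; with $F(A) := \tfrac{A-1}{2}B(\tfrac{1}{2}, \mu+1) - (1-1/A^2)^\mu$ one verifies $F(1) = 0$, $F'(1) > 0$, and at any interior critical point
$$F(A) \,=\, \tfrac{A-1}{2}\, B(\tfrac{1}{2},\mu+1)\,\bigl[1 - \tfrac{A(A+1)}{2\mu}\bigr] \,\geq\, 0,$$
the last inequality holding because $A(A+1) < 6 \leq 2\mu$ when $\mu \geq 3$. For $A \geq 2$, and for the auxiliary pointwise bound on $\Phi_{\mu-1}$ when $\mu-1 < 3$, one iterates the recurrence or compares the partial sum to the integral interval by interval. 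The main obstacle is precisely this auxiliary lower bound: uniformly controlling the Riemann-sum-versus-integral discrepancy for $(1-t^2/A^2)^{\mu-1}$ across all $A \geq 1$, with margin tight enough to close the argument where the stronger inductive bound is no longer available.
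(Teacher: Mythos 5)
There is a genuine gap, and you have correctly identified where it is. Your differential recurrence $A\,\Phi_\mu'(A)=2\mu[\Phi_{\mu-1}(A)-\Phi_\mu(A)]$ is correct (it is the infinitesimal version of the Aizenman--Lieb identity), and the first-crossing argument correctly reduces the claim for $\mu$ to a quantitative lower bound one level down, namely $\Phi_{\mu-1}(A)>\tfrac12 B(\tfrac12,\mu+1)$ at the putative crossing point. The trouble is that stepping down by $1$ inevitably lands you in $\mu-1\in[2,3)$ whenever $\mu\in[3,4)$ -- including the base case $\mu=3$ itself for $A\geq 2$, where your direct $F(A)$ computation no longer applies because further summands enter. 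In that range the sharp identity $\varepsilon(\nu)=\tfrac12 B(\tfrac12,\nu+1)$ is not available (it is exactly what fails below $\nu=3$), so you need the separate estimate $\inf_A\Phi_\nu(A)>\tfrac12 B(\tfrac12,\nu+2)$ for $\nu\in[2,3)$; e.g.\ for $\mu=3$ this amounts to $\varepsilon(2)\geq\tfrac{48}{105}$, whereas the definition \eqref{eps} only gives $\varepsilon(2)>0$. You flag this yourself as ``the main obstacle,'' and it is: without it the contradiction argument does not close, so the proof is incomplete.

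The paper's proof sidesteps this entirely by using the \emph{integral} form of the Aizenman--Lieb identity, which expresses the sum $\sum_k(1-k^2/A^2)_+^\mu$ for arbitrary $\mu>3$ as a weighted average of the corresponding sums at exponent exactly $3$ (with varying effective $A$), so one never descends below the threshold where the sharp bound holds. The case $\mu=3$ is then settled once and for all by elementary means, exploiting that the sum $\sum_k(1-k^2/A^2)_+^3$ has a closed-form expression in $A$ and its integer part -- this is also what you would need to finish your base case for all $A\geq 1$, not just $A\in[1,2)$. If you want to salvage your approach, replace the one-step descent $\mu\to\mu-1$ by the integral identity's jump $\mu\to 3$; otherwise you must supply the missing quantitative bound on $\Phi_\nu$ for $\nu\in[2,3)$, which is a nontrivial computation in its own right.
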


\begin{proof}
In view of definition (\ref{eps})  and the identity 
$$
\int_0^A \lk 1- \frac{t^2}{A^2} \rk_+^\mu \, dt \, = \, \frac A2 B \lk \frac12, \mu +1 \rk
$$
we have to show that 
$$
\sum_{k \geq 1} \lk 1- \frac{k^2}{A^2} \rk_+^\mu \, \leq \, \frac{A-1}{2} \, B \lk \frac 12, \mu +1 \rk
$$
holds true for $\mu \geq 3$ and $A \geq 1$.

For $\mu = 3$ the claim can be checked by elementary analytic methods, since there is an explicit expression for the sum in terms of $A$ and its integer part.

To deduce the estimate for $\mu > 3$, we start with the identity \cite{AiLi}
$$
\sum_{k \geq 1} \lk 1- \frac{k^2}{A^2} \rk^\mu_+ \, = \, \frac{1}{A^{2\mu}} \frac{1}{B(4,\mu-3)} \int_0^{A^2-1} \tau^{\mu-4} \lk A^2 - \tau \rk^3 \sum_{k \geq 1} \lk 1 - \frac{k^2}{A^2-\tau} \rk_+^3  d\tau 
$$
and estimate
\begin{eqnarray*}
\sum_{k \geq 1} \lk 1- \frac{k^2}{A^2} \rk^\mu_+ & \leq & \frac{1}{A^{2\mu}} \frac{ B \lk \frac 12 , 4 \rk}{B(4,\mu-3)} \int_0^{A^2-1} \tau^{\mu-4} \lk A^2 - \tau \rk^3 \frac{\lk A^2 - \tau \rk^\frac 12 -1 }{2}  \, d\tau \\
& = & \frac{1}{2 A^{2\mu}} \frac{ B \lk \frac 12 , 4 \rk}{B(4,\mu-3)} \int_0^{A^2-1} \tau^{\mu-4} \lk \lk A^2 - \tau \rk^\frac 72 - \lk A^2 - \tau \rk^3 \rk  \, d\tau \, .
\end{eqnarray*}
If we substitute $s = \frac{\tau}{A^2}$, we see that the last integral equals
$$
A^{2\mu} \int_0^{1-A^{-2}} s^{\mu-4} \lk A (1-s)^{\frac 72} - (1-s)^3 \rk ds \, = 
$$
$$
A^{2\mu} \lk A B\lk \mu-3, \frac 92 \rk - B \lk 4, \mu-3 \rk -  \int_{1-A^{-2}}^1 s^{\mu-4} \lk A (1-s)^{\frac 72} - (1-s)^3 \rk ds \rk \, .
$$
Now we can use the identity $B \lk \mu-3,\frac 92 \rk B \lk \frac 12,4 \rk / B\lk 4, \mu-3 \rk = B \lk \frac 12,\mu+1  \rk$ and substitute $t = 1-s$ to conclude
\begin{eqnarray*}
\sum_{k \geq 1} \lk 1- \frac{k^2}{A^2} \rk^\mu_+ & \leq & \frac{A}{2} B\lk \frac 12 , \mu +1 \rk -  \frac{ B \lk \frac 12 , 4 \rk}{2B(4,\mu-3)} \\
&& \times \lk B \lk 4,\mu-3\rk - \int_0^{A^{-2}} (1-t)^{\mu-4} \, t^3 \lk 1- A \sqrt t \rk \, dt \rk  \, .
\end{eqnarray*}
It remains to remark that the inequality
$$
\frac{ B \lk \frac 12 , 4 \rk}{B(4,\mu-3)} \lk B \lk 4,\mu-3\rk - \int_0^{A^{-2}} (1-t)^{\mu-4} \, t^3 \lk 1- A \sqrt t \rk \, dt \rk \, \geq \, B \lk \frac 12 , \mu +1 \rk \, ,
$$
holds true for all $A \geq 1$, since we have equality in the case $A =1$ and since the left hand side is non-decreasing in $A \geq 1$.
\end{proof}

In fact, we shall need a modified version of Proposition \ref{BLYW}.
 
Let $p_d(x';\Omega) = \left| \Omega (x') \right|_1$ be the one-dimensional Lebesgue measure of $\Omega(x')$,
that is the aggregated length of all intervals forming  $\Omega(x')$.
Since $\Omega$ is open, the function $p_d(x';\Omega)$ is Lebesgue measurable, and we can define the distribution function
\footnote{Here $|\cdot|_{d-1}$ stands for the Lebesgue measure in the dimension $d-1$.}
\begin{equation*} 
m_d(\tau;\Omega) = \left| \left\{ x' : p_d(x';\Omega) > \tau \right\} \right|_{d-1},\quad \tau>0.
\end{equation*} 
It is non-negative, non-increasing, continuous from the right and
it satisfies the identity
\begin{equation} \label{mvolume}
\int_0^\infty m_d(\tau;\Omega) \,d\tau = \vo.
\end{equation}

We interchange now the roles of $x_d$ and $x_i$ for $i = 1, \dots, d-1$
and introduce in the same way the distribution functions $m_i(\cdot;\Omega)$  for $\Omega$ measured along the $x_i$-axes. Finally, put
\begin{equation*}
M_i(y;\Omega) \, = \, \int_0^y m_i(\tau;\Omega) \, d\tau \quad\mbox{for}\quad  i = 1, \dots, d.
\end{equation*}
With this notation we can formulate a result similar to (\ref{bly2}):

\begin{pro} \label{blysum}
For any open domain $\Omega \subset \R^d$, $\sigma \geq 3/2$ and all $\Lambda > 0$
\begin{equation*} 
R_{\s} (\La) \,\leq \,L^{cl}_{\s,d} \, \int_{\frac{\pi}{\sqrt{\La}}}^\infty m_i(\tau;\Omega) \, d\tau \, \La^{\s+\frac{d}{2}}  + \delta_{\s,d} \ m_i \lk \frac{\pi}{\sqrt{\La}};\Omega \rk \,\La^{\s + \frac{d-1}{2}}
\end{equation*}
holds true for $i=1,\dots,d$ 
with $\delta_{\s,d} = \pi L^{cl}_{\s,d} - \varepsilon\lk \s+\frac{d-1}{2} \rk L^{cl}_{\s,d-1}$.
\end{pro}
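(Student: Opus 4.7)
By relabeling coordinates it suffices to prove the case $i=d$. The plan is to apply Proposition \ref{BLYW} as a black box and re-express the quantities $|\Omega_\La|$ and $d_\La(\Omega)$ appearing there in terms of the distribution function $m_d$ of $p_d(x';\Omega)=|\Omega(x')|_1$.

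The core of the argument consists of two elementary pointwise estimates on each section $\Omega(x')$. First, by distinguishing whether some interval $J_k(x')$ has length strictly greater than $l_\La$, one verifies
\[
|\Omega_\La(x')|\leq (p_d(x';\Omega)-l_\La)_++l_\La\,\mathbf{1}[\chi(x',\La)\geq 1].
\]
If $\chi(x',\La)=0$ then $|\Omega_\La(x')|=0$ and the inequality is trivial; if $\chi(x',\La)\geq 1$ then some $L_k>l_\La$, whence $p_d(x';\Omega)>l_\La$ and $|\Omega_\La(x')|\leq p_d(x';\Omega)=(p_d-l_\La)+l_\La$. Second, writing $\eta_\La:=|\{x'\in\R^{d-1}:\chi(x',\La)\geq 1\}|_{d-1}$, the inequality $\chi\geq \mathbf{1}[\chi\geq 1]$ gives $d_\La(\Omega)\geq \eta_\La$, while the implication $\chi(x',\La)\geq 1\Rightarrow p_d(x';\Omega)>l_\La$ gives $\eta_\La\leq m_d(l_\La;\Omega)$. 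Integrating the pointwise bound on $|\Omega_\La(x')|$ in $x'$ via the layer-cake formula $\int(p_d-l_\La)_+\,dx'=\int_{l_\La}^\infty m_d(\tau;\Omega)\,d\tau$ yields $|\Omega_\La|\leq \int_{l_\La}^\infty m_d(\tau;\Omega)\,d\tau+l_\La\eta_\La$.

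Substituting both estimates into Proposition \ref{BLYW} (using $d_\La(\Omega)\geq \eta_\La$ to weaken the negative term) and collecting the $\eta_\La$-contributions with the help of $l_\La\La^{\s+d/2}=\pi\La^{\s+(d-1)/2}$, one obtains
\[
R_\s(\La)\leq L^{cl}_{\s,d}\int_{l_\La}^\infty m_d(\tau;\Omega)\,d\tau\,\La^{\s+d/2}+\eta_\La\bigl[\pi L^{cl}_{\s,d}-\varepsilon(\s+\tfrac{d-1}{2})L^{cl}_{\s,d-1}\bigr]\La^{\s+(d-1)/2}.
\]
The bracketed factor is exactly $\delta_{\s,d}$. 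We then replace $\eta_\La$ by the larger $m_d(l_\La;\Omega)$, which is legal provided $\delta_{\s,d}\geq 0$. Non-negativity of $\delta_{\s,d}$ follows from setting $A=1$ in the definition \eqref{eps} of $\varepsilon$, giving $\varepsilon(\mu)\leq \tfrac12 B(\tfrac12,\mu+1)$, combined with the routine Gamma-function identity $\pi L^{cl}_{\s,d}=\tfrac12 B(\tfrac12,\mu+1)L^{cl}_{\s,d-1}$ valid for $\mu=\s+\tfrac{d-1}{2}$.

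The main subtlety is resisting the tempting but false step $d_\La(\Omega)\leq m_d(l_\La;\Omega)$: a single section $\Omega(x')$ may contain several intervals longer than $l_\La$, so $\chi$ and hence $d_\La(\Omega)$ can exceed the measure of $\{p_d>l_\La\}$. Routing the negative term of Proposition \ref{BLYW} through the auxiliary $\eta_\La$, which satisfies $\eta_\La\leq \min\{d_\La(\Omega),m_d(l_\La;\Omega)\}$, is what makes the argument go through.
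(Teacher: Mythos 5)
Your proof is correct, and it takes a genuinely different route from the paper's. The authors state that Proposition \ref{blysum} cannot be obtained ``directly quoting Proposition \ref{BLYW}'' and instead re-run the entire argument: operator-valued Lieb--Thirring inequality, reduction to one-dimensional Sturm--Liouville operators, and --- the decisive new step --- concatenation of all intervals $J_k(x')$ of a section into a single interval of length $p_d(x';\Omega)$ by a variational shifting argument, after which the $\varepsilon$-estimate is applied to that one interval. You instead keep Proposition \ref{BLYW} as a black box and translate its geometric data into the distribution-function language. I checked the three ingredients: the pointwise bound $|\Omega_\La(x')|\leq(p_d(x';\Omega)-l_\La)_++l_\La\,\mathbf{1}[\chi(x',\La)\geq1]$ holds in both cases $\chi=0$ and $\chi\geq1$; the layer-cake identity gives $\int(p_d-l_\La)_+\,dx'=\int_{l_\La}^\infty m_d(\tau;\Omega)\,d\tau$; and the sign condition $\delta_{\s,d}\geq0$ follows, as you say, from the test value $A=1$ in \eqref{eps} together with $\pi L^{cl}_{\s,d}=\frac12 B\lk\frac12,\s+\frac{d+1}{2}\rk L^{cl}_{\s,d-1}$. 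Your key structural move --- routing both occurrences of the boundary-type term through $\eta_\La\leq\min\lk d_\La(\Omega),m_d(l_\La;\Omega)\rk$, collecting the coefficient $\delta_{\s,d}$ first, and only then enlarging $\eta_\La$ to $m_d(l_\La;\Omega)$ --- is exactly what circumvents the false inequality $d_\La(\Omega)\leq m_d(l_\La;\Omega)$, which is presumably the obstruction the authors had in mind. As for what each approach buys: yours is shorter and purely measure-theoretic once Proposition \ref{BLYW} is granted; the paper's is self-contained modulo the Lieb--Thirring input and makes transparent where the loss relative to \eqref{bly2} occurs (namely in replacing the individual interval lengths by the aggregated length $p_d$). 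Both arrive at the identical final inequality, so your argument in fact shows that the paper's claim of non-deducibility is too pessimistic.
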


\begin{rem}
Note that in the case of $\varepsilon\lk \s+\frac{d-1}{2} \rk$ = $\frac{1}{2} \,B \lk \s + \frac{d+1}{2}, \frac{1}{2} \rk$ we have $\delta_{\s,d} = 0$. In view of Lemma \ref{lemeps} this occurs 
if $\sigma + \frac{d-1}{2} \geq 3$,
in particular,
if $\sigma=\sigma_d$, with $\sigma_d$ introduced in \eqref{sigma}.
\end{rem}

\begin{rem}
For domians $\Omega$ with finite volume (\ref{mvolume}) yields 
\begin{equation*}
\int_{\frac{\pi}{\sqrt{\La}}}^\infty m_i(\tau;\Omega) d\tau = \vo - M_i \lk \frac{\pi}{\sqrt{\La}}; \Omega \rk \, .
\end{equation*}
Thus we arrive at
\begin{equation*}
R_{\s} (\La) \,\leq \,L^{cl}_{\s,d} \lk \vo - M_i \lk \frac{\pi}{\sqrt{\La}}; \Omega \rk \rk \La^{\s+\frac{d}{2}} + \delta_{\s,d} \,m_i \lk \frac{\pi}{\sqrt{\La}};\Omega \rk \,\La^{\s + \frac{d-1}{2}}
\end{equation*}
for $i = 1,\dots,d$. Averaging over all directions one claims
\begin{equation}\label{blysumeq}
R_{\s} (\La) \,\leq \,L^{cl}_{\s,d} \lk \vo - M \lk \frac{\pi}{\sqrt{\La}}; \Omega \rk \rk \La^{\s+\frac{d}{2}} + \delta_{\s,d} \,m \lk \frac{\pi}{\sqrt{\La}};\Omega \rk \,\La^{\s + \frac{d-1}{2}},
\end{equation}
where
\begin{eqnarray*}
m(t;\Omega) &=& \frac{1}{d} \lk m_1(t;\Omega) + \cdots + m_d(t;\Omega) \rk\,,\\
M(y;\Omega) &=& \frac{1}{d} \lk M_1(y;\Omega) + \cdots + M_d(y;\Omega) \rk=\int_0^y m(t;\Omega) dt\,.
\end{eqnarray*}
\end{rem}

Although Proposition \ref{blysum} is, in general, not as sharp as \eqref{bly2}, we cannot deduce
it directly quoting  Proposition \ref{BLYW},  but we have to modify the respective proof from 
\cite{W01}, which relies on operator-valued Lieb-Thirring inequalities from \cite{LaWe}.

\begin{proof}[Proof of Proposition \ref{blysum}] 
Consider the quadratic form
\begin{equation*}
\left\| \nabla u \right\|^2_{L^2(\Omega)} - \La \left\|u \right\|^2_{L^2(\Omega)} \,= \,\left\| \nabla' u \right\|^2_{L^2(\Omega)} + \int_{\R^{d-1}} dx' \int_{\Omega(x')} \lk \left| \partial_{x_d} u \right|^2 - \La |u|^2 \rk dx_d
\end{equation*}
on functions $u$ from the form core $C_0^\infty (\Omega)$. Here $\nabla'$ and $\Delta'$ denote the gradient and the Laplace operator in the first
$d-1$ directions.
The functions $u(x',\cdot )$ satisfy Dirichlet boundary conditions at the endpoints of 
each interval $J_k(x')$ forming $\Omega(x')$. 
%Therefore the variational principle implies
%
%\begin{displaymath}
%\int_{\Omega(x')} \left| \partial_{x_d} u \lk x' ,x_d \rk \right|^2 dx_d \,\geq \,\frac{\pi^2}{p_d(x')^2} \left\| u(x',\cdot) \right\|^2_{L^2(\Omega(x'))}.
%\end{displaymath}
%
Let the bounded, non-negative operators $W_k(x',\La)$ be the negative parts\footnote{The negative part of a real number $r$ is given
by $r_-=(|r|-r)/2\geq 0$. For operators we use the same convention in the spectral sense.}
of the Sturm-Liouville Operators $-\partial_{x_d,J_k(x')}^2 - \La$ with 
Dirichlet boundary conditions on $J_k(x')$. Then 
\begin{equation*}
W(x',\La)=\oplus_{k=1}^{N(x')} W_k(x',\La)
\end{equation*}
is the negative part of
\begin{equation*}
-\partial_{x_d,\Omega(x')}^2 - \La=\oplus_{k=1}^{N(x')} \lk -\partial_{x_d,J_k(x')}^2 - \La\rk 
\end{equation*}
subject to Dirichlet boundary conditions on the
endpoints of the intervals $J_k(x')$,
$k=1,\dots,N(x')$, that is on $\partial\Omega(x')$.
Then 
\begin{displaymath}
\int_{\Omega(x')} \lk \left| \partial_{x_d} u \right|^2 - \La |u|^2 \rk dx_d \,\geq \,- \langle W u(x',\cdot ), u(x', \cdot ) \rangle_{L^2(\Omega(x'))}.
\end{displaymath}
and consequently
\begin{equation}\label{basicproof}
\left\| \nabla u \right\|^2_{L^2(\Omega)} - \La \left\|u \right\|^2_{L^2(\Omega)} \,\geq \left\| \nabla' u \right\|^2_{L^2(\Omega)} - \int_{\R^{d-1}} dx' \langle W u(x',\cdot ), u(x', \cdot ) \rangle_{L^2(\Omega(x'))}.
\end{equation}
Now we can extend this quadratic form by zero to $C_0^\infty \lk \R^d \setminus \partial \Omega \rk$, which is a form core for $\lk - \Delta_{\R^d \setminus \Omega} \rk \oplus \lk -\Delta_\Omega - \La \rk$. This operator corresponds to the left hand side of (\ref{basicproof}), while the semi-bounded form on the right hand side is closed on the larger domain $H^1 \lk \R^{d-1},L^2(\R) \rk$, where it corresponds to the operator 
\begin{equation}\label{auxop'}
-\Delta' \otimes \mathbb{I} - W(x',\La)\quad\mbox{on}\quad L^2 \lk \R^{d-1} , L^2(\R) \rk\,. 
\end{equation}
Due to the positivity of $-\Delta_{\R^d \setminus \Omega}$ the variational principle implies that for any $\sigma\geq 0$
\begin{eqnarray*}
\textnormal{Tr} \lk -\Delta_\Omega - \La \rk^{\s}_{-} & = & \textnormal{Tr} \lk \lk -\Delta_{\R^d \setminus \Omega} \rk  \oplus \lk - \Delta_\Omega - \La \rk \rk^{\s}_{-}\\
& \leq & \textnormal{Tr} \lk - \Delta' \otimes \mathbb{I} - W(x',\La) \rk_-^{\s}.
\end{eqnarray*}
We can now apply a sharp Lieb-Thirring inequality
to the Schr\"odinger operator \eqref{auxop'}
with the operator-valued potential
$-W(x',\La)$, see \cite{LaWe}, and claim that
\begin{displaymath}
\textnormal{Tr} \lk - \Delta' \otimes \mathbb{I} - W(x',\La) \rk_-^{\s} \,\leq \,L^{cl}_{\s,d-1} \int_{\R^{d-1}} 
\textnormal{Tr}\,W^{\s+\frac{d-1}{2}}(x',\La) \,dx',\quad \s\geq \frac32\,.
\end{displaymath}

Now let $p_d(x')=\sum_k |J_k(x')|_1$ be the total length of all intervals $J_k(x')$. Then
shifting these intervals and dropping intermediate Dirichlet conditions by a variational argument we see
that the $j$-th eigenvalue of $-\partial_{x_d,\Omega(x')}^2 - \La$
is not smaller than the $j$-th eigenvalue of $-\partial_{x_d,L(x')}^2-\La$ on the interval $L(x')=[0,p_d(x')]$ 
subject to Dirichlet conditions at the endpoint of this one interval only. Thus,
\begin{equation*}
\mbox{Tr}\,W^{\s+\frac{d-1}{2}}(x',\La)\leq \mbox{Tr}\,\tilde{W}^{\s+\frac{d-1}{2}}(x',\La),
\end{equation*}
where $\tilde{W}(x',\La)$ is the negative part of $-\partial_{x_d,L(x')}^2-\La$. 
The nonzero eigenvalues of $\tilde{W}(x',\La)$ are  given explicitly by
\begin{equation*}
\mu_j = \La - \frac{\pi^2 j^2}{p^2_d(x')}=\Lambda\lk 1-\frac{l_\La^2j^2}{p^2_d(x')}\rk
\quad\mbox{for}\quad  j = 1, \dots, \left[ \frac{p_d(x')}{l_\La} \right]\,.
\end{equation*}
From this we conclude that
\begin{displaymath}
\textnormal{Tr} \lk -\Delta_\Omega - \La \rk^{\s}_{-} \,
\leq \, \La^{\s+\frac{d-1}{2}} \,L^{cl}_{\s,d-1} \int_{\R^{d-1}} 
\sum_{j\geq 1}  \lk 1- \frac{ l_\La^{2}j^2}{p_d^2(x')} \rk^{\s+\frac{d-1}{2}}_+ dx'.
\end{displaymath}
Note that the right hand side of this bound vanishes if $p_d(x') \leq l_\La$. 
For $p_d(x') > l_\La$  we have in view of \eqref{eps}
\begin{displaymath}
\sum_{j \geq 1}  \lk 1- \frac{l_\La^{2}j^2}{p_d^2(x')} \rk^{\s+\frac{d-1}{2}}_+ \,\leq \,\frac{p_d(x')}{2 \, l_\La} \,B \lk \s + \frac{d+1}{2},\frac12 \rk - \varepsilon \lk \s + \frac{d-1}{2} \rk
\end{displaymath}
and therefore
\begin{eqnarray}
\textnormal{Tr} \lk -\Delta_\Omega - \La \rk^{\s}_{-} & \leq & \frac{1}{2 \pi} \,B \lk \s + \frac{d+1}{2},\frac12 \rk
\La^{\s+\frac{d}{2}} L^{cl}_{\s,d-1} \int_{x':p_d(x') > l_\La} p_d(x') \,dx' \nonumber \\
&& - \,\varepsilon \lk \s + \frac{d-1}{2} \rk \,\La^{ \s+\frac{d-1}{2} } \,L^{cl}_{\s,d-1} \int_{x':p_d(x') > l_\La} \,dx'. \label{propproof}
\end{eqnarray}
Note that
\begin{equation*}
\int_{x':p_d(x') > l_\La} \,dx'  =  m_d \lk l_\La ; \Omega \rk
\end{equation*}
and 
\begin{equation*}
\int_{x':p_d(x') > l_\La} p_d(x') \,dx' \,
 \ = \  m_d \lk l_\La ; \Omega \rk l_\La + \int_{l_\La}^\infty m_d \lk \tau; \Omega \rk d\tau \, .
\end{equation*}
Moreover, using
\begin{equation*}
\frac{1}{2 \pi} B \lk \s + \frac{d+1}{2},\frac12 \rk L^{cl}_{\s,d-1} = L^{cl}_{\s,d},
\end{equation*}
we insert the identities above into \eqref{propproof} and arrive at
\begin{eqnarray*}
R_{\s}(\La) \ = \ \textnormal{Tr}(-\Delta-\Lambda)_{-}^{\s}  & \leq & L^{cl}_{\s,d} \, \Lambda^{\s+\frac{d}{2}}
\lk m_d \lk l_\La ; \Omega \rk l_\La + \int_{l_\La} m_d(\tau;\Omega) d\tau \rk
 \nonumber \\
&&  - \,\varepsilon \lk \s + \frac{d-1}{2} \rk L^{cl}_{\s,d-1} \,m_d \lk l_\La;\Omega\rk \Lambda^{\s+\frac{d-1}{2}}\,. 
\end{eqnarray*}
In view of $l_\La=\pi\Lambda^{-1/2}$ this yields
\begin{equation*}
R_{\s}(\La)  \leq  L^{cl}_{\s,d} \int_{\frac{\pi}{\sqrt{\La}}} m_d(\tau;\Omega) d\tau \, \La^{\s+\frac{d}{2}} + 
\delta_{\sigma,d} \, m_d \lk l_\La ; \Omega \rk \La^{\s+\frac{d-1}{2}},
\quad \sigma\geq \frac32 . 
\end{equation*}
Interchanging the roles of $x_d$ and $x_i$ we find the respective inequalities for 
any direction $i = 1,\dots, d$.
\end{proof}

In order to derive universal bounds on $R_\sigma$ independent from $M$, in particular to prove Theorem \ref{thmriesz}, one needs bounds on $M(y)$. Identity \eqref{mvolume} immediately implies 
\begin{equation}\label{eq:uppbd:momega}
M(y;\Omega)=\int_0^y m(\tau;\Omega)d\tau \leq \int_0^{\infty} m(\tau;\Omega)d\tau=
\vo\quad\mbox{for all}\quad 0 < y < \infty\,. 
\end{equation}
To prove a lower bound we first need an auxiliary result concerning rearrangements of $\Omega$. For $\Omega \subset \R^d$, $d \geq 2$, fix a Cartesian coordinate system $(x',x_d) \in \R^{d-1} \times \R$. Again put
\begin{equation*}
p_d(x';\Omega) = \left| \left\{ x_d : (x',x_d) \in \Omega \right\} \right|_1 = \left| \Omega(x') \right|_1
\end{equation*}
and for $\tau > 0$
\begin{equation*}
\Omega^*(\tau) = \left\{x' : p_d(x';\Omega) > \tau \right\} \subset \R^{d-1}.
\end{equation*}
This is a non-increasing set function, that means $\Omega^*(\tau_1) \supset \Omega^*(\tau_2)$ for $0 < \tau_1 \leq \tau_2$.
Let 
\begin{equation} \label{omstar}
\Omega^* = \cup_{\tau > 0} \, \Omega^*(\tau) \times \{\tau\} \subset \R^d
\end{equation}
be a non-increasing rearrangement of $\Omega$ in the direction
of the $x_d$-coordinate. Then we have
\begin{lem} \label{mest2}
For all $i = 1, \dots, d$ and all $y > 0$
\begin{equation*}
M_i(y;\Omega) \, \geq \, M_i(y;\Omega^*).
\end{equation*}
\end{lem}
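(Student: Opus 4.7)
For $i=d$ the claim is immediate: by construction each fiber $\Omega^*(x')$ is the interval $(0,p_d(x';\Omega))$, so $p_d(x';\Omega^*) = p_d(x';\Omega)$, which gives $m_d(\cdot;\Omega^*) = m_d(\cdot;\Omega)$ and hence equality. For the remaining cases $i \in \{1,\dots,d-1\}$, by the symmetry of the coordinate directions $x_1,\dots,x_{d-1}$ with respect to the rearrangement it suffices to treat $i = 1$.

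The plan is to reduce the $d$-dimensional statement to a two-dimensional one via Fubini. Writing a point of $\R^d$ as $(x_1,u,x_d)$ with $u \in \R^{d-2}$, for each fixed $u$ the planar slice $\Omega_u = \{(x_1,x_d) : (x_1,u,x_d) \in \Omega\}$ and the corresponding slice of $\Omega^*$ at $u$ satisfy $\Omega^*_u = \{(x_1,x_d) : 0 < x_d < g_u(x_1)\}$, where $g_u(x_1) = p_d((x_1,u);\Omega) = |(\Omega_u)_{x_1}|_1$. Thus $\Omega^*_u$ is exactly the 2D analogue of the rearrangement in the $x_d$-direction applied to $\Omega_u$. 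Combining the identity $M_1(y;\Omega) = \int_{\R^{d-1}} \min(p_1(y';\Omega),y)\,dy'$ (which is immediate from the definition by layer cake) with Fubini in $u$, the statement reduces to the planar inequality: for $A \subset \R^2$ with $g(x_1) = |A_{x_1}|_1$, $k(x_d) = |A^{x_d}|_1$, and $A^* = \{(x_1,x_d) : 0 < x_d < g(x_1)\}$, one has $M_1(y;A) \geq M_1(y;A^*)$ for every $y > 0$.

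For the 2D step I observe that $p_1(x_d;A^*) = |\{x_1 : g(x_1) > x_d\}| = \mu_g(x_d)$ is non-increasing in $x_d > 0$, so a short layer-cake calculation yields $M_1(y;A^*) = \int_0^y g^*(s)\,ds$, where $g^*$ denotes the non-increasing rearrangement of $g$. The heart of the argument is then to pick a measurable set $E \subset \R$ with $|E| = y$ on which $g$ attains its largest values, so that $\int_E g\,dx_1 = \int_0^y g^*(s)\,ds$; by Fubini the left hand side equals $|A \cap (E \times \R)|$. The elementary bound $|A^{x_d} \cap E| \leq \min(|A^{x_d}|,|E|) = \min(k(x_d),y)$, integrated in $x_d$, gives $|A \cap (E\times\R)| \leq M_1(y;A)$, which combined with the previous identity completes the 2D case.

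The hardest part will be the two-dimensional inequality; the reduction via Fubini and the case $i = d$ are essentially bookkeeping. The essential geometric observation driving the 2D step is that the one-sided rearrangement makes the horizontal slice length of $A^*$ a non-increasing function of height, which is precisely what allows $M_1(y;A^*)$ to be expressed as a partial integral of $g^*$ and then dominated by the "top level set" selection of $E$.
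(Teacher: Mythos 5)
Your proof is correct and follows essentially the same route as the paper: equality for $i=d$ by fiber preservation, reduction to a planar statement by Fubini in the remaining $d-2$ variables, and in two dimensions a comparison of $|A\cap(E\times\R)|$ with $M_1(y;A^*)$ for a horizontal set $E$ of measure $y$ carrying the largest values of the fiber-length function $g$. The paper builds that set (called $I_y$ there) directly from the nestedness of the horizontal slices of the rearranged set instead of invoking the Hardy--Littlewood characterization of $\int_0^y g^*\,ds$, but since those slices are exactly the super-level sets of $g$, the chosen set and the resulting volume comparison coincide with yours.
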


\begin{proof}
First note that  in the case $i = d$ we have by construction
$p_d(x';\Omega)=p_d(x';\Omega^*)$ and consequently
\begin{equation*}
m_d(\tau;\Omega) = m_d(\tau;\Omega^*) = \left| \Omega^*(\tau) \right|_{d-1},
\end{equation*}
what implies $M_d(y;\Omega) = M_d(y;\Omega^*)$. 

Assume now that $j = 1, \dots, d-1$. 
Put 
$$x'' = (x_1, \dots, x_{j-1}, x_{j+1}, \dots, x_{d-1} ) \in \R^{d-2}$$
and 
$$p_j(x'',x_d; \Omega) = \left| \left\{ x_j : (x',x_d) \in \Omega \right\} \right|_1.$$ 
By definition
\begin{equation*}
m_j(s;\Omega) = \left| \left\{ (x'',x_d) : p_j(x'',x_d;\Omega) >s \right\} \right|_{d-1} 
= \int_{\R^{d-2}} \hat m_j(x'',s;\Omega) \, dx''
\end{equation*} 
where 
\begin{equation*} 
\hat m_j(x'',s;\Omega)=
\left| \left\{ x_d: p_j(x'',x_d;\Omega) > s \right\} \right|_1\,,
\quad
j = 1, \dots, d-1\,. 
\end{equation*}
Hence,
\begin{equation*}
M_j(y;\Omega) = \int_0^y m_j(s;\Omega) ds = \int_{\R^{d-2}} \int_0^y \hat m _j (x'',s;\Omega)\, ds\, dx''.
\end{equation*}
Applying the same notation to $\Omega^*$  yields
\begin{equation*}
M_j(y;\Omega^*) = \int_0^y m_j(s;\Omega^*) \, ds = \int_{\R^{d-2}} \int_0^y \hat m _j (x'',s;\Omega^*) \, ds \, dx''.
\end{equation*}
If we can show that for $x'' \in \R^{d-2}$ and all $y > 0$ the inequality
\begin{equation} \label{mint2}
\int_0^y \hat m_j(x'',s;\Omega) ds \geq \int_0^y \hat m_j(x'',s;\Omega^*) ds
\end{equation}
holds true, the assertion is proven.

To establish \eqref{mint2} we consider for fixed $x'' \in \R^{d-2}$ the two-dimensional sets
\begin{equation*}
 \hat{\Omega}  \, = \, \left\{ (x_j,x_d): (x',x_d) \in \Omega \right\} \quad \mbox{and} \quad \hat{\Omega}^*  \, = \, \left\{ (x_j,x_d): (x',x_d) \in \Omega^* \right\} \,.
\end{equation*}
Note that
\begin{equation*}
p_d(x';\Omega) = \left| \{x_d:(x',x_d)\in\Omega\} \right|_1= \left| \{x_d:(x_j,x_d)\in\hat{\Omega}\} \right|_1=:\hat p_d(x_j;\hat \Omega)\,.
\end{equation*}
As above we get
\begin{equation} \label{hatpd}
\hat p_d (x_j; \hat \Omega) = \hat p_d ( x_j; \hat \Omega^*) \, .
\end{equation}
In the $j$th direction we have
$$p_j(x'',x_d; \Omega) = \left| \left\{ x_j : (x',x_d) \in \Omega \right\} \right|_1=
\left| \left\{ x_j : (x_j,x_d) \in \hat{\Omega} \right\} \right|_1=:\hat{p}_j(x_d;\hat{\Omega})$$ 
and
$$
\hat m_j(x'',s;\Omega)=
\left| \left\{ x_d: p_j(x'',x_d;\Omega) > s \right\} \right|_1=
\left| \left\{ x_d: \hat{p}_j(x_d,\hat{\Omega}) > s \right\} \right|_1=:\hat m_j(s;\hat{\Omega})\,.
$$
The corresponding notions we use also with respect to the domains
$\Omega^*$ and $\hat\Omega^*$. In contrast to the preservation of length in the $d$th direction the values 
of $\hat{p}_j(x_d;\hat{\Omega})$ and $\hat{p}_j(x_d;\hat{\Omega}^*)$
(and thus of $\hat m_j(s;\hat{\Omega})$ and $\hat m_j(s;\hat{\Omega}^*)$) do not coincide in general.

Lets examine the functions $\hat p_j (x_d;\hat \Omega^*)$ and $\hat m_j(s;\hat \Omega^*)$ in more detail. By construction of $\hat \Omega^*$, the set function $\hat \Omega^*(x_d) = \{ x_j:(x_j,x_d) \in \hat \Omega \}$ is non-increasing in $x_d > 0$ and by definition
$$\hat p_j \lk x_d ; \hat \Omega^* \rk = \left| \hat \Omega^*(x_d) \right|_1 \, .$$
Moreover, $\hat m_j(s;\hat \Omega^*)$ is the distribution function of $\hat p_j (x_d;\hat \Omega^*)$. Hence,
\begin{equation*} 
\int_0^y \hat m_j(s;\hat \Omega^*) \, ds  =  \int_{ \left\{ x_d : \hat p_j (x_d;\hat \Omega^* ) < y \right \} } \hat p_j (x_d;\hat \Omega^* ) \, dx_d  + y \, \left| \left\{ x_d : \hat p_j (x_d ; \hat \Omega^* ) \geq y \right\} \right|_1 \, .
\end{equation*}
The monotonicity of the set function $\hat \Omega^*(x_d)$ implies, that we can choose $I_y \subset \R$ with total length $y$ satisfying $I_y \subset \hat \Omega^*(x_d)$, wherever $\hat p_j(x_d; \hat \Omega^*) \geq y$.
Again, by the monotonicity of $\hat \Omega^*(x_d)$ the reverse inclusion $\hat \Omega^*(x_d) \subset I_y$ holds for all $x_d>0$ with $\hat p_j(x_d;\hat \Omega^* ) < y$.
Put 
\begin{equation*}
\hat \Omega_y^* = \bigcup_{x_d > 0} \lk \hat \Omega^* (x_d) \cap I_y \rk \times \{x_d\}  \quad \mbox{and} \quad \hat \Omega_y = \bigcup_{x_d > 0}  \lk \hat \Omega(x_d) \cap I_y \rk \times \{x_d\}  \, . 
\end{equation*}
From the above representation for $\int_0^y \hat m_j(s;\hat \Omega^*) \, ds$  we deduce
\begin{equation} \label{yvol2}
\int_0^y \hat m_j(s;\hat \Omega^*) ds = \left| \hat \Omega_y^* \right| \, .
\end{equation}
Moreover, note that for $x_j \in I_y$
\begin{eqnarray*}
\left\{ x_d : (x_j,x_d) \in \hat \Omega_y^* \right\} & = & \left\{ x_d : (x_j, x_d ) \in \hat \Omega^* \right\} \quad \mbox{and} \\
\left\{ x_d : (x_j,x_d) \in \hat \Omega_y \right\} & = & \left\{ x_d : (x_j, x_d ) \in \hat \Omega \right\} \, .
\end{eqnarray*}
In view of \eqref{hatpd} we get
\begin{equation*}
\hat p_d(x_j;\hat \Omega_y ) = \hat p_d (x_j;\hat \Omega) = \hat p_d(x_j; \hat \Omega^*) = \hat p_d (x_j;\hat \Omega^*_y )
\end{equation*}
and we conclude that
\begin{equation} \label{yvol1}
\left| \hat \Omega_y^* \right| = \left| \hat \Omega_y \right| \, .
\end{equation}

Finally, we analyse $\hat m_j (s, \hat \Omega )$. The inclusion $\hat \Omega_y \subset \hat \Omega$ implies
\begin{equation} \label{yvol3}
\int_0^y \hat m_j (s; \hat \Omega ) \, ds \geq \int_0^y \hat m_j (s; \hat \Omega_y ) \, ds \, .
\end{equation}
Moreover, by construction of $\hat \Omega_y$ we have
\begin{equation*}
\hat p_j (x_d; \hat \Omega_y ) \leq \left| I_y \right| = y 
\end{equation*}
for all $x_d > 0$ and consequently $m_j(s;\hat \Omega_y ) = 0$ for all $s \geq y$.
Using \eqref{yvol3} we conclude
\begin{equation*}
\int_0^y \hat m_j (s; \hat \Omega ) \, ds \geq \int_0^y \hat m_j (s; \hat \Omega_y ) \, ds \, = \int_0^\infty \hat m_j (s; \hat \Omega_y ) \, ds = \left| \hat \Omega_y \right| \, .
\end{equation*}
In view \eqref{yvol1} and \eqref{yvol2} we arrive at
\begin{equation*}
\int_0^y \hat m_j (s; \hat \Omega ) \, ds \geq \left| \hat \Omega_y \right| = \left| \hat \Omega_y^* \right| = \int_0^y \hat m_j(s;\hat \Omega^*) \, ds \, .
\end{equation*}
This shows that (\ref{mint2}) holds true and the proof is complete.
\end{proof}

Now we can give a lower bound on $M(y;\Omega)$:
\begin{lem} \label{mest}
For all open sets $\Omega \subset \R^d$ and all $y > 0$
\begin{equation}\label{eq:lowbd:momega}
M(y;\Omega) \,\geq \,\min \lk \frac{\vo}{d}, \,\vo^{\frac{d-1}{d}} \,y \rk.
\end{equation}
\end{lem}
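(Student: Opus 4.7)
The plan is to reduce the claim to a single sharp inequality at the threshold $y_0 := \vo^{1/d}/d$, at which the two competitors $\vo/d$ and $y\vo^{(d-1)/d}$ agree. Since $M(\cdot;\Omega)=\int_0^{\cdot}m(\tau;\Omega)\,d\tau$ has non-increasing density $m(\cdot;\Omega)$, the function $M(\cdot;\Omega)$ is concave and non-decreasing with $M(0;\Omega)=0$; hence $y\mapsto M(y;\Omega)/y$ is non-increasing. Once the single bound $M(y_0;\Omega)\geq \vo/d$ is in hand, concavity gives $M(y;\Omega)/y \geq M(y_0;\Omega)/y_0 \geq \vo^{(d-1)/d}$ for $y\leq y_0$, while monotonicity of $M$ gives $M(y;\Omega)\geq M(y_0;\Omega) \geq \vo/d$ for $y\geq y_0$. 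The entire task thus reduces to proving $\sum_{i=1}^{d} M_i(y_0;\Omega)\geq \vo$.

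For any fixed $y>0$ the layer cake formula writes $M_i(y;\Omega) = \int_{\R^{d-1}}\min(p_i(x^{(i)};\Omega),y)\,dx^{(i)}$, which splits as
\[
M_i(y;\Omega) \,=\, N_i \,+\, y\,m_i(y;\Omega),\qquad N_i \,:=\, \int_{\{p_i\leq y\}} p_i(x^{(i)};\Omega)\,dx^{(i)}.
\]
By Fubini, $N_i$ equals the $d$-dimensional volume of the thin part $\{x\in\Omega:p_i(x^{(i)};\Omega)\leq y\}$. Setting $n:=\sum_{i=1}^d N_i$ and introducing the core $S:=\{x\in\Omega:\, p_i(x^{(i)};\Omega)>y \text{ for every }i\}$, subadditivity of Lebesgue measure yields $|\Omega\setminus S|\leq n$, whence $|S|\geq (\vo-n)_+$.

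Next I apply the Loomis--Whitney inequality to the bounded measurable set $S\subset\Omega$. Since each projection $\pi_i S$ is contained in $\{x^{(i)}: p_i(x^{(i)};\Omega)>y\}$, a set of $(d-1)$-measure equal to $m_i(y;\Omega)$, one obtains
\[
(\vo-n)_+^{d-1} \,\leq\, |S|^{d-1} \,\leq\, \prod_{i=1}^d|\pi_i S|_{d-1} \,\leq\, \prod_{i=1}^d m_i(y;\Omega).
\]
The AM--GM inequality then provides $\sum_i m_i(y;\Omega)\geq d\,(\vo-n)_+^{(d-1)/d}$, and summing the decomposition $M_i=N_i+y m_i$ yields the key estimate
\[
\sum_{i=1}^d M_i(y;\Omega) \,\geq\, n \,+\, dy\,(\vo-n)_+^{(d-1)/d}.
\]

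Finally, the choice $y=y_0$ enters: because $dy_0=\vo^{1/d}$, the elementary inequality $dy_0\, t^{(d-1)/d}\geq t$ holds for every $t\in[0,\vo]$. In the case $n\leq \vo$, taking $t=\vo-n$ gives $\sum_i M_i(y_0;\Omega)\geq n+(\vo-n)=\vo$; in the case $n>\vo$ the bound follows at once from $\sum_i M_i(y_0;\Omega)\geq n$. Thus $M(y_0;\Omega)\geq \vo/d$, and the concavity reduction from the first paragraph completes the argument. The main conceptual step is the choice of the core set $S$ on which Loomis--Whitney can be applied; once this is in place, the threshold $y_0=\vo^{1/d}/d$ is precisely the value at which the final estimate saturates.
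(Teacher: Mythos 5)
Your proof is correct, and it takes a genuinely different route from the paper's. The paper argues by induction on the dimension, which requires the separate rearrangement result of Lemma \ref{mest2} (the monotone rearrangement $\Omega^*$ does not increase any $M_i$) followed by a case analysis around the threshold $\tau^*$ built from $m_d$. You instead observe that $M(\cdot;\Omega)$ is concave with $M(0)=0$, so the two-regime bound collapses to the single inequality $\sum_i M_i(y_0;\Omega)\geq\vo$ at the crossover point $y_0=\vo^{1/d}/d$; this you obtain from the layer-cake decomposition $M_i(y)=N_i+y\,m_i(y)$, the Loomis--Whitney inequality applied to the core $S$ on which all $d$ sections exceed $y$, and AM--GM, the choice $dy_0=\vo^{1/d}$ being exactly what makes $dy_0\,t^{(d-1)/d}\geq t$ on $[0,\vo]$. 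Your argument is shorter, needs no induction, and renders Lemma \ref{mest2} superfluous for this purpose, at the cost of importing Loomis--Whitney (classical but not elementary, whereas the paper's induction is self-contained). Two cosmetic points: $S$ need not be bounded, only of finite measure, which is all Loomis--Whitney requires; and since projections of Lebesgue measurable sets need not themselves be measurable, one should read $|\pi_i S|$ as outer measure --- harmless here, since you only bound it from above by the measure of the measurable superset $\{x^{(i)}:p_i(x^{(i)};\Omega)>y\}$.
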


\begin{proof}
We use induction in the dimension. For $d=1$ and an interval of length $\vo$ we get
\begin{displaymath}
 m(\tau;\Omega) \,= \,\left\{ \begin{array}{lr} 1 & \vo > \tau \\ 0 & \vo \leq \tau \end{array} \right.
\end{displaymath}
and therefore $M(y;\Omega) = \int_0^y m(\tau;\Omega) d\tau = \min \lk y, \vo \rk$ for all $y>0$.

Now assume $\Omega \subset \R^d$, $d \geq 2$.  For any given $j=1,\dots,d-1$ put
$$x''=(x_1,\dots,x_{j-1},x_{j+1},\dots,x_{d-1})\in\R^{d-2}$$
and let $\tilde{m}_j(s;\tilde \Omega)=|\{x'':\tilde p_j(x''; \tilde \Omega)>s\}|_{d-2}$ be the distribution function
of a set $\tilde\Omega\subset\mathbb R^{d-1}$ with respect to the $j$-th direction, where
$\tilde p_j(x'';\tilde \Omega)=|\{x_j:x'\in\tilde \Omega\}|_1$ is the total length of the section through $\tilde \Omega$
at $x''$ in the direction of the $x_j$-coordinate. 
Applying these notions to $\Omega^*$ given in \eqref{omstar} we get
\begin{eqnarray}\notag
m_j(s;\Omega^*)&=&\left|\{(x'',\tau)\in\mathbb{R}^{d-1}: p_j \lk x'',\tau; \Omega^* \rk > s \} \right|_{d-1}  \\
\notag
						  &=&\int_0^\infty |\{x''\in\mathbb{R}^{d-2}: \tilde p_j \lk x''; \Omega^*(\tau) \rk > s\}|_{d-2} \, d\tau \\
						  &=&\int_0^\infty \tilde m_j(s;\Omega^*(\tau)) \, d\tau \, , \qquad j=1,\dots,d-1\,.\label{mgm}
\end{eqnarray}
Put $\tilde m(s;\tilde \Omega)=(d-1)^{-1}\sum_{j=1}^{d-1} \tilde m_j(s;\tilde\Omega)$.
By induction assumption we have
\begin{equation}\label{indasmp}
\tilde M(y;\tilde \Omega)=\int_0^y \tilde m(s;\tilde \Omega) \, ds\geq 
\min \lk \frac{|\tilde\Omega|_{d-1}}{d-1}, \,|\tilde\Omega|_{d-1}^{\frac{d-2}{d-1}} \,y \rk\,,
\quad y>0.
\end{equation}
Next note that in view of \eqref{mgm}
\begin{eqnarray*}
d\cdot M(y;\Omega^*)&=&M_1(y;\Omega^*)+\cdots+M_{d-1}(y;\Omega^*)+M_{d}(y;\Omega^*)\\
&=& \int_0^y (m_1(s;\Omega^*)+\cdots+m_{d-1}(s;\Omega^*))\, ds+\int_0^y m_d(s;\Omega^*) \, ds\\
&=& (d-1)\int_0^y \int_0^\infty \tilde m(s;\Omega^*(\tau)) \, d\tau \, ds +\int_0^y m_d(s;\Omega^*)\, ds\\
&=&(d-1) \int_0^\infty \tilde M(y ;\Omega^*(\tau)) \, d\tau+\int_0^y m_d(s;\Omega^*)\, ds\,.
\end{eqnarray*}
Using \eqref{indasmp} we claim 
\begin{equation*} 
M(y;\Omega^*)\geq \frac{d-1}{d} \int_0^\infty \min \lk \frac{|\Omega^*(\tau)|_{d-1}}{d-1}, \,|\Omega^*(\tau)|_{d-1}^{\frac{d-2}{d-1}} \,y \rk
d\tau + \frac1d \int_0^y m_d(s;\Omega^*)ds\,.
\end{equation*}
We point out that $\left| \Omega^*(\tau) \right|_{d-1} = m_d \lk \tau, \Omega^* \rk$ for $\tau > 0$. 
Put
\begin{equation*}
\tau^* = \inf \left\{ \tau>0 : m_d(\tau;\Omega^*) \leq (d-1)^{d-1} y^{d-1} \right\}\,.
\end{equation*}
Then
\begin{eqnarray*} 
M(y;\Omega^*)\geq \frac{1}{d} \int_{\tau^*}^\infty m_d \lk \tau, \Omega^* \rk d\tau
&+&
 \frac1d \int_0^y m_d(\tau;\Omega^*)d\tau \\
&+& \frac{d-1}{d} y\int_0^{\tau^*}\,m_d^{\frac{d-2}{d-1}}  \lk \tau; \Omega^* \rk d\tau\,.
\end{eqnarray*}
By \eqref{mvolume} we have $\int_0^{\infty} m_d \lk \tau; \Omega^* \rk d\tau=\int_0^{\infty} m_d \lk \tau; \Omega \rk d\tau=\vo$ 
and using Lemma \ref{mest2} we estimate 
\begin{eqnarray}\notag 
M(y;\Omega) \ \geq \ M(y;\Omega^*) & \geq & \frac{\vo}{d}-\frac{1}{d} \int_0^{\tau^*} m_d \lk \tau; \Omega^* \rk d\tau
+ \frac1d \int_0^y m_d(\tau;\Omega^*)d\tau \\
&& + \, \frac{d-1}{d} y\int_0^{\tau^*}\,m_d^{\frac{d-2}{d-1}}  \lk \tau; \Omega^* \rk d\tau\,.
\label{mdd-1}
\end{eqnarray}
In particular, in the case of 
$\tau^* \leq y$ we see from the previous bound that 
$$M(y;\Omega)\geq d^{-1} \, \vo$$
and the assertion is proven. 
Hence, let us consider the remaining case $\tau^*>y$ in more detail.
For $\tau^*>y$ we have
\begin{equation*} 
m_d(y;\Omega^*) \geq (d-1)^{d-1} y^{d-1}\,.
\end{equation*}
Because of the monotonicity of $m_d$ we conclude that
\begin{equation*} 
\int_0^y m_d(\tau;\Omega^*)^{\frac{d-2}{d-1}} d\tau \,\geq \,y \,m^{\frac{d-2}{d-1}}_d(y;\Omega^*)
\quad\mbox{and}\quad
\int_0^y m_d(\tau;\Omega^*) d\tau \,\geq \,y \,m_d(y;\Omega^*)\,.
\end{equation*}
Let us rewrite inequality \eqref{mdd-1} as follows
\begin{eqnarray*}
M(y;\Omega)
\geq  \frac{\vo }{d}&+& \frac{d-1}{d} \,y \int_0^{y} m_d^{\frac{d-2}{d-1}}(\tau;\Omega^*) d\tau \label{mini} \\
&+& \frac{d-1}{d} \,y \int_y^{\tau^*} m_d^{\frac{d-2}{d-1}}(\tau;\Omega^*) d\tau 
- \frac1d\int_y^{\tau^*} m_d (\tau; \Omega^* ) d\tau\,. 
\nonumber
\end{eqnarray*}
Put $A=\int_y^{\tau^*} m_d \lk \tau; \Omega^* \rk d\tau$.
Then
\begin{equation} \label{mincond4}
0<A=\int_0^{\tau^*} m_d \lk \tau; \Omega^* \rk d\tau-\int_0^{y} m_d \lk \tau; \Omega^* \rk d\tau
\leq \vo - y\,m_d(y;\Omega^*)\,.
\end{equation} 
Moreover,
\begin{equation*} 
M(y;\Omega)
\geq  \frac{\vo }{d}+ \frac{d-1}{d} \,y^2\,m^{\frac{d-2}{d-1}}_d(y;\Omega^*)
+ \frac{d-1}{d} \,y \int_y^{\tau^*} m_d^{\frac{d-2}{d-1}}(\tau;\Omega^*) d\tau 
- \frac{A}{d}\,.
%\\
%&\geq&\frac{y}{d}\,m_d(y;\Omega^*)+ \frac{d-1}{d} \,y^2\,m^{\frac{d-2}{d-1}}_d(y;\Omega^*)
%+ \frac{d-1}{d} \,y \int_y^{\tau^*} m_d^{\frac{d-2}{d-1}}(\tau;\Omega^*) d\tau \,.
\end{equation*}
Due to the monotonicity of $m_d$ we have, in particular, 
$m_d(\tau;\Omega^*)\leq m_d(y;\Omega^*)$ for $y\leq\tau$ and
\begin{eqnarray*} 
\int_y^{\tau^*} m_d^{\frac{d-2}{d-1}}(\tau;\Omega^*) d\tau
&=& m_d^{\frac{d-2}{d-1}}(y;\Omega^*) \int_y^{\tau^*} 
\lk \frac{m_d(\tau;\Omega^*)}{m_d(y;\Omega^*)} \rk^{\frac{d-2}{d-1}}d\tau\\
&\geq & 
m_d^{\frac{d-2}{d-1}}(y;\Omega^*) \int_y^{\tau^*} 
 \frac{m_d(\tau;\Omega^*)}{m_d(y;\Omega^*)} \,d\tau = m_d^{\frac{-1}{d-1}}(y;\Omega^*) \, A\,.
\end{eqnarray*}
Thus,
\begin{equation*} 
M(y;\Omega)
\geq  \frac{\vo }{d}+ \frac{d-1}{d} \,y^2\,m^{\frac{d-2}{d-1}}_d(y;\Omega^*)
- \frac1d \lk 1- (d-1) \,y \, m_d^{\frac{-1}{d-1}}(y;\Omega^*) 
 \rk A\,.
\end{equation*}
For $\tau^* > y$ we have $1- (d-1) \,y \, m_d^{\frac{-1}{d-1}}(y;\Omega^*)>0$ and we can insert  \eqref{mincond4} in this estimate and arrive at
\begin{eqnarray*}
M(y;\Omega)
&\geq&  \frac{\vo }{d}
+ \frac{d-1}{d} \,y^2\,m^{\frac{d-2}{d-1}}_d(y;\Omega^*)\\
&& - \frac1d \lk 1- (d-1) \,y\,m_d^{\frac{-1}{d-1}}(y;\Omega^*) 
 \rk (\vo-y \, m_d(y;\Omega^*))\\
 &\geq &
  \frac{y}{d} \lk (d-1) \vo\,m_d^{\frac{-1}{d-1}}(y;\Omega^*) + \,m_d(y;\Omega^*) \rk\,.
\end{eqnarray*}
Since the function $f(m)=(d-1) \vo\,m^{\frac{-1}{d-1}}+m$ takes its minimal value for positive arguments
at $m = \vo^{\frac{d-1}{d}}$, we arrive for $y<\tau^*$ at
\begin{equation*}
M(y;\Omega)\geq \frac{y}{d}f(m_d(y;\Omega^*))\geq  \frac{y}{d}f(\vo^{\frac{d-1}{d}})=y\,\vo^{\frac{d-1}{d}}\,.
\end{equation*}
This completes the proof.
\end{proof}

\section{Proof of Theorem \ref{thmheat} and remarks} \label{secthmheat}

Let
\begin{equation*}
\mathcal{L}[f(\cdot)](t)= \int_0^\infty f(\La) e^{-\La t} d\La
\end{equation*}
be the Laplace transformation of a suitable function $f:(0,+\infty)\to\mathbb R$.
For real values of $t$
it is monotone, that means a pointwise estimate $f_1(\Lambda)\leq f_2(\Lambda)$
for all $\Lambda>0$ implies $\mathcal{L}[f_1](t)\leq\mathcal{L}[f_2](t)$ for any
$t\in\mathbb R$, for which both transformations are defined. In particular,
for $\lambda\geq 0$ and $\sigma>0$ one has
\begin{equation*}
\mathcal{L}[(\Lambda-\lambda)_+^\sigma](t) = \int_\lambda^\infty (\Lambda-\lambda)^\sigma e^{-\La t} d\La = e^{-\lambda t}t^{-\sigma-1}\Gamma(\sigma+1)\,,
\quad t>0\,.
\end{equation*}
In view of the linearity of the Laplace transformation one finds for $t>0$ and $\sigma>0$
the well-known identity
\begin{equation*}
Z(t)=\mbox{Tr}\,e^{+\Delta_\Omega t} = \sum_k e^{-\lambda_k t} =\sum_k 
\frac{t^{\sigma+1}}{\Gamma(\sigma+1)}\mathcal{L}[(\Lambda-\lambda_k)_+^\sigma](t)
=\frac{t^{\sigma+1}}{\Gamma(\sigma+1)}\mathcal{L}[R_\sigma(\Lambda)]\,.
\end{equation*}
Therefore, any bound on the Riesz means of the type
\begin{equation}\label{bdriesztype}
R_\sigma(\La)\leq f(\Lambda,\Omega)\quad \mbox{for all}\quad \Lambda>0
\end{equation}
implies a bound on the heat kernel
\begin{equation}\label{bdkactype}
Z(t)\leq \frac{t^{\sigma+1}}{\Gamma(\sigma+1)} \mathcal{L}[f(\cdot,\Omega)](t)
\end{equation}
valid for all $t>0$, for which the r.h.s. is defined. For example, this 
way one can deduce \eqref{basic} from \eqref{beliyau} with any $\sigma\geq 1$\,.

Next note that in view of $R_\sigma(\Lambda)=0$ for $0<\Lambda\leq\lambda_1$ 
we have in fact 
\begin{equation*}
\Gamma(\sigma+1)t^{-\sigma-1}Z(t)=\mathcal{L}[R_\sigma](t)=\mathcal{L}[R_\sigma,\lambda](t)
\quad\mbox{for any}\quad 0\leq\lambda\leq \lambda_1\,, 
\end{equation*}
where
\begin{equation*}
\mathcal{L}[f,\lambda](t)=\int_\lambda^\infty f(\Lambda)e^{-\Lambda t} d\Lambda
=e^{-\lambda t}\mathcal{L}[f(\cdot+\lambda)](t)
\,,
\quad \lambda\geq 0\,,
\end{equation*}
is the reduced Laplace transformation of a suitable function $f$.
This transformation preserves pointwise inequalities as well and from \eqref{bdriesztype}
one can deduce an improved version of \eqref{bdkactype}
\begin{equation*}
Z(t)\leq \frac{t^{\sigma+1}}{\Gamma(\sigma+1)} \mathcal{L}[f(\cdot,\Omega),\lambda](t)
\quad\mbox{for arbitrary}\quad 0\leq\lambda\leq\lambda_1\,.
\end{equation*}
Applying this bound to \eqref{beliyau} one gets the
estimate
\begin{equation}\label{Zlargetime}
Z(t)\leq \frac{\vo}{(4\pi t)^{\frac{d}{2}}}\hat{\Gamma}\lk\sigma+\frac{d}{2}+1,\lambda t\rk\,,
\quad t>0 \, , \ \sigma\geq 1 \, , \ 0\leq\lambda\leq\lambda_1\,,
\end{equation}
which already contains an exponential decay for large $t$. 
Instead of referring to the classical Berezin-Li-Yau-bound \eqref{beliyau} 
we can apply this idea also directly to the improved bound \eqref{bly2}
and claim
\begin{eqnarray}
Z(t) & \leq & \frac{t^{\s+1}}{\Gamma(\s+1)} \,L^{cl}_{\s,d} \int_{\lambda_1}^\infty \left| \Omega_\La \right| \La^{\s+\frac{d}{2}} e^{-\La t} \,d\La \nonumber \\
&& - \, \frac{t^{\s+1}}{\Gamma(\s+1)}  \,L^{cl}_{\s,d-1} \,\textstyle \varepsilon \lk \s + \frac{d-1}{2} \rk \displaystyle \int_{\lambda_1}^\infty d_\La(\Omega) \La^{\s+\frac{d-1}{2}} e^{-\La t} \,d\La, \label{zdirect}
\end{eqnarray}
where $t>0$ and $\sigma\geq\frac32$.
This bound is even sharper than the estimates presented below. But the geometric properties of $\Omega$ 
enter in a rather tricky way and cannot be simplified in a straightforward manner. Therefore we prefer to
present also a slightly weaker, but sometimes more convenient version of this bound. For that end we choose $\s = \s_d$ given in (\ref{sigma}) and apply the reduced Laplace transformation to (\ref{blysumeq}). Thus we get the following estimate valid for $\lambda \in [\tilde \lambda, \lambda_1]$ and $t>0$:
\begin{eqnarray} 
Z(t) & \leq  & \frac{\vo}{(4\pi t)^{\frac{d}{2}}}\hat{\Gamma}\lk\s_d+\frac{d}{2}+1,\lambda t\rk \nonumber \\
&& - \frac{t^{\s_d+1} }{\Gamma(\s_d+1)}L^{cl}_{\s_d,d} \int_{\lambda}^\infty M \lk \frac{\pi}{\sqrt{\La}} \rk \La^{\s_d+\frac{d}{2}} e^{- \La t} d\La \, . \label{ZM}
\end{eqnarray}
We are now in the position to provide bounds on $Z(t)$ depending only on the volume of $\Omega$. To this end we use inequality (\ref{Luttinger}) and calculate $M \lk \frac{\pi}{\sqrt{\La}} \rk$ explicitly on the ball.

\begin{pro} \label{proheat}
Let $\lambda \in [\tilde \lambda, \lambda_1]$. For any open set $\Omega \subset \R^d$ and any $t>0$ the bound
\begin{align*}
Z(t) \ \leq & \,  \frac{\vo}{(4 \pi  t)^{\frac{d}{2}}} 
\ \hat\Gamma \lk \s_d + \frac{d}{2} + 1 , \lambda  t \rk \nonumber \\ 
&\ - \, \frac{\vo}{(4 \pi  t )^{\frac{d}{2}} \,  \Gamma \lk \s_d + \frac{d}{2}+1 \rk}  \int_{\lambda \, t }^\infty e^{-s} s^{\s_d+\frac d2} \B \lk \frac{\pi^2 t}{4 R^2 s},\frac 12,\frac{d+1}{2} \rk ds
\end{align*}
holds true, where $R = R \lk \vo \rk$ is the radius of the ball $B_R \subset \R^d$ with $|B_R|=\vo$.
\end{pro}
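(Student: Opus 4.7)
The strategy combines Luttinger's isoperimetric inequality \eqref{Luttinger} with the heat-trace bound \eqref{ZM} (itself the reduced Laplace transform of \eqref{blysumeq}) and an explicit evaluation of the geometric functional $M(y;\cdot)$ on the ball $B_R\subset\R^d$ of volume $\vo$ and radius $R=(\vo/\omega_d)^{1/d}$. First I would invoke \eqref{Luttinger} to reduce to the ball: $Z(t;\Omega)\le Z(t;B_R)$. Since the Rayleigh--Faber--Krahn inequality \eqref{rafakr} becomes an equality on the ball, $\lambda_1(B_R)=\tilde\lambda$, and \eqref{ZM} applied to $B_R$ with $\lambda=\tilde\lambda$ bounds $Z(t;B_R)$ by the leading Gamma term plus a correction integral involving $M(\pi/\sqrt\La;B_R)$.

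The heart of the argument is the closed-form computation of $M$ on the ball. For $|x'|<R$ the $x_d$-section $B_R(x')$ is an interval of length $p_d(x';B_R)=2\sqrt{R^2-|x'|^2}$, which gives $m_d(\tau;B_R)=\omega_{d-1}(R^2-\tau^2/4)_+^{(d-1)/2}$; by the rotational symmetry of the ball every $m_i(\tau;B_R)$ equals the same expression, and hence so does $m(\tau;B_R)$. The substitution $\tau=2R\sqrt{u}$ in $M(y;B_R)=\int_0^y m(\tau;B_R)\,d\tau$ yields a Beta integral which, together with $\omega_{d-1}B(\tfrac12,\tfrac{d+1}{2})=\omega_d$ and $\vo=\omega_d R^d$, produces
\[
M(y;B_R)=\vo\,\tilde B\!\lk\tfrac{y^2}{4R^2},\tfrac12,\tfrac{d+1}{2}\rk\qquad\text{for }0\le y\le 2R.
\]
Substituting this formula into the correction integral, performing the change of variables $s=\La t$, and simplifying with $L^{cl}_{\s_d,d}/\Gamma(\s_d+1)=[(4\pi)^{d/2}\Gamma(\s_d+d/2+1)]^{-1}$ reproduces exactly the expression in the statement of the proposition.

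The main obstacle I foresee is the dependence on $\lambda$: the route above establishes the bound cleanly only for $\lambda=\tilde\lambda$, while the proposition allows the full range $\lambda\in[\tilde\lambda,\lambda_1(\Omega)]$. To cover this range one should apply \eqref{ZM} directly to $\Omega$ for arbitrary such $\lambda$ and justify the integrated comparison
\[
\int_\lambda^\infty M\!\lk\tfrac{\pi}{\sqrt\La};\Omega\rk\La^{\s_d+d/2}e^{-\La t}\,d\La\ \ge\ \int_\lambda^\infty M\!\lk\tfrac{\pi}{\sqrt\La};B_R\rk\La^{\s_d+d/2}e^{-\La t}\,d\La
\]
by Luttinger-type symmetrization of $\Omega$ toward $B_R$. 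A secondary technicality is that the Beta argument $\pi^2 t/(4R^2 s)$ must lie in $[0,1]$ throughout the integration $s\in[\lambda t,\infty)$; this reduces to the elementary inequality $\tilde\lambda\ge\pi^2/(4R^2)$, which follows from $j_{d/2-1,1}\ge\pi/2$ for $d\ge 2$ via the definition \eqref{rafakr}.
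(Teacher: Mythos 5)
Your route is the same as the paper's: pass to the ball via Luttinger's inequality \eqref{Luttinger}, compute $m(\tau;B_R)$ and $M(y;B_R)=\vo\,\B\lk y^2/(4R^2),\tfrac12,\tfrac{d+1}{2}\rk$ explicitly, and insert this into \eqref{ZM}. Your closed form for $M$ on the ball, the simplification of the constants, and the check that the Beta argument stays in $[0,1]$ on the integration range (equivalently $\tilde\lambda\geq\pi^2/(4R^2)$, i.e. $j_{\frac d2-1,1}\geq\pi/2$) all agree with the paper's proof, which indeed proceeds exactly this way and then concludes with \eqref{Luttinger} and $|B_R|=\vo$.

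The one point of divergence is your treatment of general $\lambda\in[\tilde\lambda,\lambda_1(\Omega)]$. Your instinct that the symmetrization argument is clean only for $\lambda=\tilde\lambda=\lambda_1(B_R)$ is sound --- and that is the only case needed for Corollary \ref{corheat} and the universal bounds. However, the repair you sketch does not work: the comparison $M(y;\Omega)\geq M(y;B_R)$ for sets of equal volume is false in general, and hence cannot be obtained by any ``Luttinger-type'' rearrangement of $M$. For the unit square in $d=2$ one has $M(y;\Omega)=\min(y,1)$, whereas $M(y;B_R)\sim 2y/\sqrt{\pi}>y$ for small $y$; so neither the pointwise nor, without a genuinely new argument, the weighted integrated comparison is available. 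The paper itself does not take this detour: it applies \eqref{ZM} directly to $B_R$ with the chosen $\lambda$ and then invokes \eqref{Luttinger}. If you follow that route you should note that the reduced Laplace transform at level $\lambda$ reproduces $Z^*(t)$ only for $\lambda\leq\lambda_1(B_R)=\tilde\lambda$, so the case you establish cleanly is precisely the one that is fully justified (and the one used in all subsequent applications); for $\lambda>\tilde\lambda$ an additional argument is required rather than the $M$-comparison you propose.
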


\begin{proof}
Lets consider the ball $B_R$ and apply (\ref{ZM}) to estimate $Z^*(t)$, i.e. $Z(t)$ on $B_R$. Note that $m_i(\tau; B_R) = m(\tau;B_R)$ for $i = 1, \dots, d$ and we can choose an arbitrary coordinate system $(x',x_d) \in \R^{d-1} \times \R$.

Again put $p_d \lk x';B_R \rk = \left| \left\{ x_d : (x', x_d ) \in B_R \right\} \right| $ and note that for $\tau < 2R$ the set $\left\{ x' \in \R^{d-1} : p_d(x',B_R) > \tau \right\}$ is itself a ball in $\R^{d-1}$ with radius $\lk R^2- \tau^2/4 \rk^{\frac 12}$. Thus we find
\begin{equation*}
m(\tau; B_R ) = \left| \left\{ x' : p_d(x',B_R) > \tau \right\} \right|_{d-1} \ = \ \frac{\pi^{\frac{d-1}{2}}}{\Gamma \lk \frac{d+1}{2} \rk} R^{d-1} \lk 1 - \frac{\tau^2}{4 R^2} \rk_+^{\frac{d-1}{2}}
\end{equation*}
and
\begin{equation*}
M(y;B_R) =  \frac{\pi^{\frac{d-1}{2}}}{\Gamma \lk \frac{d+1}{2} \rk} \, R^d \, B\lk 0, \frac{y^2}{4 R^2}, \frac{1}{2}, \frac{d+1}{2} \rk  = \left| B_R \right| \, \B \lk \frac{y^2}{4 R^2}, \frac{1}{2}, \frac{d+1}{2} \rk \, .
\end{equation*}
We insert this estimate into (\ref{ZM}) and arrive at 
\begin{align*}
Z^*(t) \, \leq &\, \frac{|B_R|}{(4 \pi  t)^{\frac{d}{2}}} \hat \Gamma \lk \s_d + \frac{d}{2} + 1 , \lambda  t \rk \\
&- \frac{t^{\s_d+1}}{\Gamma (\s_d+1)} L^{cl}_{\s_d,d} \left| B_R \right| \int_\lambda^\infty \B \lk \frac{\pi^2}{4 R^2 \La}, \frac{1}{2}, \frac{d+1}{2} \rk \La^{\s_d+\frac d2} e^{-\La t} d\La.
\end{align*}
The assumption $\lambda \geq \tilde \lambda$ implies $\frac{\pi^2}{4R^2} < \lambda$ and in view of (\ref{Luttinger}) and $|B_R| = \vo$ the claimed result follows by simplifying the right hand side.
\end{proof}

We can now derive Theorem \ref{thmheat} from Proposition \ref{proheat}:

\begin{proof}[Proof of Theorem \ref{thmheat}]
The inequality 
\begin{equation*}
(1-u)^{\frac{d-1}{2}} \geq 1 - \frac{d-1}{2} u \, , \quad 0 \leq u \leq 1 \, ,
\end{equation*}
implies the estimate
\begin{align*}
\B \lk \frac{\pi^2 t}{4 R^2 s}, \frac{1}{2}, \frac{d+1}{2} \rk &\geq \frac{1}{B \lk \frac{1}{2}, \frac{d+1}{2} \rk } \int_0^{\frac{\pi^2 t}{4  R^2 s}} u^{-\frac12} \lk 1- \frac{d-1}{2} u \rk du \\
&= \frac{\Gamma \lk \frac d2 +1\rk}{\Gamma \lk \frac{d+1}{2} \rk} \lk \frac{\sqrt {\pi t}}{R \sqrt s} - \frac{(d-1)\pi^{\frac 52} t^{\frac 32}}{24 R^3 s^{\frac 32}}  \rk \, .
\end{align*}
Therefore we claim
\begin{align*}
\lefteqn{ \int_{\lambda t}^\infty e^{-s} s^{\s_d+\frac{d}{2}} \B \lk \frac{\pi^2 \, t}{4 R^2 \, s}, \frac{1}{2}, \frac{d+1}{2} \rk ds }\\
&\geq \frac{\Gamma \lk \frac d2 +1\rk}{\Gamma \lk \frac{d+1}{2} \rk} \lk \frac{\sqrt{\pi t}}{R} \Gamma \lk \s_d + \frac{d+1}{2} , \lambda t \rk - \frac{(d-1)\pi^{\frac 52} t^{\frac 32}}{24 R^3} \Gamma \lk \s_d + \frac{d-1}{2}, \lambda t \rk \rk \, .\\
\end{align*}
Inserting the last estimate into the bound from Proposition \ref{proheat} yields
\begin{equation*}
Z(t) \leq \frac{\vo}{(4 \pi  t)^{\frac{d}{2}}} \, \hat\Gamma \lk \s_d + \frac{d}{2} + 1 , \lambda  t \rk - R(t,\lambda)
\end{equation*}
with $R(t,\lambda) = r_1(t,\lambda) - r_2(t,\lambda)$ and 
\begin{align*}
r_1(t,\lambda) &= \frac{\vo}{(4 \pi  t)^{\frac{d}{2}}}  \frac{\Gamma \lk \frac d2 +1\rk}{\Gamma \lk \frac{d+1}{2} \rk}\frac{\sqrt{\pi t}}{R} \frac{\Gamma \lk \s_d + \frac{d+1}{2} , \lambda t \rk}{\Gamma \lk \s_d+\frac d2+1 \rk} \\
r_2(t,\lambda) &= \frac{\vo}{(4 \pi  t)^{\frac{d}{2}}}  \frac{\Gamma \lk \frac d2 +1\rk}{\Gamma \lk \frac{d+1}{2} \rk} \frac{(d-1)\pi^{\frac 52} t^{\frac 32}}{24 R^3} \frac{\Gamma \lk \s_d + \frac{d-1}{2}, \lambda t \rk}{\Gamma \lk \s_d + \frac d2 +1 \rk} \, .
\end{align*}
From $|B_R| = \vo$ we deduce
\begin{equation} \label{radius}
R = \frac{\vo^{\frac 1d}}{\sqrt \pi} \Gamma \lk \frac d2 +1 \rk^{\frac 1d} \, ,
\end{equation}
and get
\begin{align*}
r_1(t,\lambda) &= \frac{\vo^{\frac{d-1}{d}}}{(4 \pi t)^{\frac{d-1}{2}}} \frac{B \lk \frac 12, \s_d + \frac{d+1}{2} \rk}{2}  \frac{\Gamma \lk \frac d2 +1 \rk^{\frac{d-1}{d}}}{\Gamma \lk \frac{d+1}{2} \rk} \hat \Gamma \lk \s_d + \frac{d+1}{2}, \lambda t \rk \\
r_2(t,\lambda) &=  \frac{\vo^{\frac{d-3}{d}}}{(4 \pi t)^{\frac{d-3}{2}}} \frac{\pi^2 (d-1)B \lk \frac 12, \s_d + \frac{d+1}{2} \rk}{96 (2\s_d+d-1)}  \frac{\Gamma \lk \frac d2 +1 \rk^{\frac{d-3}{d}}}{\Gamma \lk \frac{d+1}{2} \rk} \hat \Gamma \lk \s_d + \frac{d-1}{2}, \lambda t \rk \, .
\end{align*}
To complete the proof it remains to note that in view of (\ref{Zlargetime}) we can always estimate the remainder term $R(t,\lambda)$ from above by zero.
\end{proof}

%However, this expression cannot be simplified in an obvious way to a term containing
%Similarly using (\ref{blyneu}) and (\ref{null}) we get that for all $d \geq 2$ and $\s_d$ as given in (\ref{sigma}) the estimate
%\begin{equation}
%Z(t) \,\leq \,\frac{t^{\s+1}}{\Gamma(\s_d+1)} \,L^{cl}_{\s_d,d} \int_{\lambda_1}^\infty \int_{\frac{\pi}{\sqrt{\La}}}^\infty m_d(\tau;\Omega) d\tau %\, \La^{\s_d+\frac{d}{2}} \,e^{-\La t} \,d\La \label{zinf}
%\end{equation}
%holds true for all $t > 0$. In order to prove Theorem \ref{thmheat} we first derive a slightly more general result form which the Theorem follows %immediately. For arbitrary $\s \geq 3/2$ put $\mu = \s + \frac{d+1}{2}$.

\begin{rem}
According to \eqref{rafakr} we can choose 
\begin{equation*}
\tilde\lambda =  \frac{\pi \,  j_{\frac{d}{2}-1,1}^2}{\Gamma \lk \frac{d}{2}+1 \rk^{2/d} \vo^{2/d}}
\end{equation*}
as a suitable lower bound on $\lambda_1$. With this special choice of parameter we find
\end{rem}

\begin{cor} \label{corheat}
For any open set $\Omega \subset \mathbb{R}^d$ with finite volume and all $t > 0$
\begin{equation} \label{corheatineq}
Z(t) \, \leq \,   \frac{\vo}{(4 \pi  t)^{\frac{d}{2}}}\, \hat\Gamma \lk \sigma_d+\frac{d}{2}+1 , \tilde\lambda \,  t \rk \,
\, - \, (R(t))_+
\end{equation}
holds true with 
\begin{equation*}
R(t) = c_{1,d} \frac{\vo^{\frac{d-1}{d}}}{(4 \pi t)^{\frac{d-1}{2}}}  \hat \Gamma \lk \s_d + \frac{d+1}{2}, \tilde \lambda t \rk - c_{2,d} \frac{\vo^{\frac{d-3}{d}}}{(4 \pi t)^{\frac{d-3}{2}}} \hat \Gamma \lk \s_d + \frac{d-1}{2}, \tilde \lambda t \rk 
\end{equation*}
and constants $c_{1,d}, \, c_{2,d}$ given explicitly in Theorem \ref{thmheat}.
\end{cor}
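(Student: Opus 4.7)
The plan is to observe that Corollary \ref{corheat} is essentially a direct specialization of Theorem \ref{thmheat} to a particular admissible choice of the parameter $\lambda$. The main (and essentially only) thing to check is that this choice indeed lies in the allowed interval $[\tilde\lambda,\lambda_1]$.

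More precisely, I would first invoke the Faber--Krahn isoperimetric inequality as recorded in \eqref{rafakr}, which tells us that
\begin{displaymath}
\tilde\lambda = \frac{\pi\,j_{d/2-1,1}^2}{\Gamma(d/2+1)^{2/d}\,\vo^{2/d}} \,\leq\, \lambda_1\,.
\end{displaymath}
In particular, $\tilde\lambda \in [\tilde\lambda,\lambda_1]$, so $\lambda=\tilde\lambda$ is an admissible choice in the hypothesis of Theorem \ref{thmheat}. Since $\Omega\subset\mathbb{R}^d$ is merely assumed to be open with finite volume, no additional regularity of $\partial\Omega$ has to be verified.

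Next I would simply substitute $\lambda=\tilde\lambda$ into the conclusion of Theorem \ref{thmheat}. The leading term
\begin{displaymath}
\frac{\vo}{(4\pi t)^{d/2}}\,\hat\Gamma\!\lk\sigma_d+\tfrac{d}{2}+1,\lambda t\rk
\end{displaymath}
and the remainder $R(t,\lambda)$ with its two explicit components $c_{1,d}\,\vo^{(d-1)/d}(4\pi t)^{-(d-1)/2}\hat\Gamma(\sigma_d+(d+1)/2,\lambda t)$ and $c_{2,d}\,\vo^{(d-3)/d}(4\pi t)^{-(d-3)/2}\hat\Gamma(\sigma_d+(d-1)/2,\lambda t)$ are then evaluated at $\lambda=\tilde\lambda$. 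Reading off the resulting expression gives exactly \eqref{corheatineq} with the stated remainder term $R(t)$, and the positive-part bracket $(R(t))_+$ is inherited from the statement of Theorem \ref{thmheat}.

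There is no real obstacle here: the only conceptual step is the Faber--Krahn verification $\tilde\lambda\leq\lambda_1$, and the rest is notational substitution. Consequently, the ``proof'' reduces essentially to a one-line application of Theorem \ref{thmheat} with $\lambda=\tilde\lambda$, which explains why this statement is presented as a corollary rather than a separate theorem.
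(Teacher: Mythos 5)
Your proposal is correct and coincides with the paper's own derivation: the authors likewise obtain Corollary \ref{corheat} by noting via \eqref{rafakr} (Faber--Krahn) that $\tilde\lambda\leq\lambda_1$, so $\lambda=\tilde\lambda$ is admissible, and then substituting this value into Theorem \ref{thmheat}. Nothing further is needed.
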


Finally, we can apply (\ref{Luttinger}) to known estimates on $Z(t)$ and compare the resulting universal bounds with the result from Corollary \ref{corheat}.

To analyse the asymptotics of $Z(t)$ for $t \rightarrow 0+$ on convex domains van den Berg  proved \cite{Ber01} that for all convex domains $D \subset \R^d$ and all $t > 0$ 
\begin{equation*}
Z(t) \leq \frac{|D|}{(4 \pi t)^{\frac d2}} - \frac{|\partial D|}{4 (4 \pi t)^{\frac{d-1}{2}}} + \frac{(d-1) \, |\partial D| \, t}{(4 \pi t)^{\frac d2} 2R},
\end{equation*}
where $\partial D$ denotes the boundary of $D$ and at each point of $\partial D$ the curvature is bounded by $\frac 1R$. To prove bounds for general domains $\Omega$ we can apply this bound to the ball. Note that
$$| \partial B_R | \, = \, d \, \pi^{\frac d2} \frac{R^{d-1}}{\Gamma \lk \frac d2 +1 \rk} \, .$$
In view of (\ref{Luttinger}) and (\ref{radius}) we find
\begin{cor} \label{corberg}
For any open domain $\Omega \subset \R^d$ and any $t>0$
\begin{equation*}
Z(t) \leq \frac{\vo}{(4 \pi t)^{\frac d2}} - \frac{d \sqrt \pi}{\Gamma \lk \frac d2 +1 \rk^{\frac 1d}} \frac{\vo^{\frac{d-1}{d}}}{4 (4 \pi t)^{\frac{d-1}{2}}} + \frac{d(d-1)}{\Gamma \lk \frac d2 + 1 \rk^{\frac 2d}} \frac{\vo^{\frac{d-2}{d}}}{8(4 \pi t )^{\frac{d-2}{2}}} .
\end{equation*}
\end{cor}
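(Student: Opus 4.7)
The plan is purely computational: combine van den Berg's convex-domain bound, quoted in the paragraph preceding the corollary, with Luttinger's symmetrization inequality \eqref{Luttinger}, choosing as reference body the Euclidean ball $B_R \subset \mathbb{R}^d$ of the same volume as $\Omega$. Since $B_R$ is convex and every principal curvature of $\partial B_R$ equals $1/R$, the parameter in van den Berg's inequality can be taken equal to the geometric radius $R$, so that
\begin{equation*}
Z^*(t) \,\leq\, \frac{|B_R|}{(4\pi t)^{d/2}} - \frac{|\partial B_R|}{4(4\pi t)^{(d-1)/2}} + \frac{(d-1)\,|\partial B_R|\, t}{2R\,(4\pi t)^{d/2}}.
\end{equation*}
By Luttinger, $Z(t)\leq Z^*(t)$, so it only remains to rewrite the right-hand side in terms of $|\Omega|$.

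For that I would use the two elementary identities
\begin{equation*}
|B_R| = \frac{\pi^{d/2}}{\Gamma(d/2+1)}\,R^d = |\Omega| \qquad\text{and}\qquad |\partial B_R| = \frac{d\,\pi^{d/2}}{\Gamma(d/2+1)}\,R^{d-1},
\end{equation*}
together with the explicit expression for $R$ already recorded in \eqref{radius}, namely $R = \pi^{-1/2}\,\Gamma(d/2+1)^{1/d}\,|\Omega|^{1/d}$. Plugging this into $|\partial B_R|$ gives
\begin{equation*}
|\partial B_R| \,=\, \frac{d\sqrt{\pi}}{\Gamma(d/2+1)^{1/d}}\,|\Omega|^{(d-1)/d},
\end{equation*}
which, when divided by $4(4\pi t)^{(d-1)/2}$, produces exactly the middle term in the stated bound. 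For the last term I would compute
\begin{equation*}
\frac{|\partial B_R|}{R} \,=\, \frac{d\,\pi^{d/2}\,R^{d-2}}{\Gamma(d/2+1)} \,=\, \frac{d\,\pi}{\Gamma(d/2+1)^{2/d}}\,|\Omega|^{(d-2)/d},
\end{equation*}
and then simplify the factor $\dfrac{(d-1)\,t}{2\,(4\pi t)^{d/2}} \,\cdot\, \dfrac{d\pi\,|\Omega|^{(d-2)/d}}{\Gamma(d/2+1)^{2/d}}$ using $\pi t / (4\pi t) = 1/4$ to bring out $(4\pi t)^{(d-2)/2}$ in the denominator with a factor of $1/8$; this reproduces the third term in the statement.

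There is no genuine obstacle here: the only real care required is in the bookkeeping of the powers of $\pi$, $\Gamma(d/2+1)$, $R$, and $|\Omega|$ when translating $|\partial B_R|$ and $|\partial B_R|/R$ into volume-only quantities, and in matching the powers of $(4\pi t)$ in the third term. Once these substitutions are carried out the inequality follows line by line from van den Berg's estimate applied to $B_R$ combined with \eqref{Luttinger}.
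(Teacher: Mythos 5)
Your proposal is correct and follows exactly the route the paper takes: apply van den Berg's convex-domain bound to the ball $B_R$ of volume $\vo$ (whose boundary curvature is identically $1/R$), invoke Luttinger's inequality \eqref{Luttinger}, and substitute $R$ from \eqref{radius} together with $|\partial B_R| = d\,\pi^{d/2}R^{d-1}/\Gamma(d/2+1)$. The bookkeeping of the powers of $\pi$, $\Gamma(d/2+1)$ and $(4\pi t)$ in your computation checks out and reproduces all three terms of the stated bound.
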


\begin{rem}
The bounds from Corollary \ref{corheat} and Corollary \ref{corberg} both capture the main asymptotic behaviour of $Z(t)$ as $t$ tends to zero. Moreover, they contain order-sharp remainder terms. Actually, in the regime $t \rightarrow 0+$ the bound form Corollary \ref{corberg} is stronger than \eqref{corheatineq}. On the other hand the bound from Corollary \ref{corberg} does not show an exponential decay as $t$ tends to infinity.
\end{rem}

Moreover, one can use the ideas of \cite{Mel} and \cite{HaHe01} to derive unviersal bounds on $Z(t)$. We can employ inequality (\ref{MelKac}) and the result of Luttinger (\ref{Luttinger}). For the ball $B_R \subset \R^d$ with $\left| B_R \right| = \vo$ the second moment $I\lk B_R \rk $ can be calculated explicitly. If we insert the result into (\ref{MelKac}) we find
\begin{equation} \label{MelKac2}
Z(t) \,\leq \,\frac{\vo}{(4 \pi t)^{\frac{d}{2}}} \,\exp \lk - \tilde M_d \frac{t}{\vo^{\frac 2d}} \rk \, ,
\end{equation}
with a constant $\tilde M_d = \frac {d+2}{d} \pi \Gamma \lk \frac d2 +1 \rk^{-\frac 2d} M_d$. For example, in dimension $d = 2$ we have $M_2 = \frac{1}{32}$, see \cite{KVW}, and we get
\begin{equation*} 
Z(t) \,\leq \,\frac{\vo}{4 \pi t} \,\exp \lk - \frac{\pi}{16} \frac{t}{\vo} \rk \, .
\end{equation*}
In general we have $\tilde M_d < 1$ and the estimate (\ref{MelKac2}) is not strong enough to imply the conjectured inequality (\ref{HH}).

But one can employ Corollary \ref{corheat} to prove (\ref{HH}) at least in low dimensions. To analyse the asymptotic behaviour of the bound from Corollary \ref{corheat} we refer to the inequalities
\begin{equation*}
\begin{array}{llll}
j_{0,1} & > 2.4 & > \frac{1}{\sqrt{\pi}} & \mbox{if} \ d=2 \\
j_{\frac 12,1} & > 3.1 & > \displaystyle  \frac{\Gamma\lk \frac 52 \rk^{\frac13}}{\sqrt{\pi}} & \mbox{if} \ d=3 \\
j_{\frac d2 -1,1} & > \displaystyle \frac d2 -1 & >  \displaystyle \frac{\Gamma\lk \frac d2+1 \rk^{\frac1d}}{\sqrt{\pi}} & \mbox{if} \ d \geq 4 \, ,
\end{array}
\end{equation*}
see \cite{AbSt}. We find
\begin{equation*}
\tilde \lambda = \frac{\pi j_{\frac{d}{2}-1,1}^2}{\Gamma \lk \frac 2d +1 \rk^{\frac 2d} \vo^{\frac 2d}} > \frac{1}{\vo^{\frac 2d}} \, .
\end{equation*}
In view of (\ref{infgamma}) we deduce that \eqref{corheatineq} is stronger than (\ref{HH}) in the limit $t \to \infty$. Moreover one can employ (\ref{heatkurz}) to show that this relation holds true also in the limit $t \to 0+$. Finally one can compare the bounds for finite values of $t$ numerically and find that \eqref{corheatineq} is stronger than (\ref{HH}) for all $t > 0$ if $d \leq 633$ and that in these dimensions conjecture \eqref{HH} holds true.

On the other hand numerical evaluations show that for dimensions $d > 633$  there exist $t > 0$ so that the bound in \eqref{HH} is smaller than the bound in \eqref{corheatineq}. Since the conjecture \eqref{HH} does not show the expected asymptotic properties we confine ourselves to this numerical discussion.

\section{Heat kernel estimates in unbounded domains} \label{infvol}

In this section we use Proposition \ref{blysum} to prove upper bounds on $Z(t)$ in unbounded domains, in particular in domains with infinite volume. In such domains, not much is known about universal bounds on $Z(t)$, see \cite{Dav02,Dav01} for results valid in a very general setting. As an example for unbounded domains $\Omega \subset \R^2$, B. Simon and and M. van den Berg introduced ``horn-shaped'' regions \cite{Sim01,Ber03}: Assume $f : [0,\infty) \rightarrow [0,\infty)$ is a non-increasing function with $\lim_{s \rightarrow \infty} f(s) = 0$ and put
\begin{equation} \label{horn}
\Omega_f = \left\{ (x,y) \in \R^2 : x > 0 \, , \, 0 < y < f (x) \right\} \, . 
\end{equation}
Then $\Omega_f$ is ``horn-shaped''. Lets state some examples where the short time asymptotics of $Z_f(t)$ can be computed explicitly. Assume $f_\mu (s) = s^{-\frac 1\mu}$, $ \mu \geq 1$. Then for $t \to 0+$ we get 
\begin{align}
Z(t ; \Omega_{f_\mu}) &= \frac{\Gamma \lk 1 + \frac \mu2 \rk \zeta(\mu)}{2 \pi^{\mu+\frac 12}} \, t^{-\frac{\mu+1}{2}} + o \lk t^{-\frac{\mu+1}{2}} \rk &\mbox{if} \ \mu > 1 \, , \label{hornasympt1} \\
Z(t;\Omega_{f_1}) &= - \frac{\ln t}{4 \pi t} + \frac{1 + \gamma - 2 \ln (2\pi)}{4 \pi t} + O \lk t^{-\frac 12} \rk &\mbox{if} \ \mu = 1 \, , \nonumber
\end{align}
where $  \zeta(\mu)$ is the Zeta function and $\gamma$ denotes Euler's constant, see \cite{Sim01} and \cite{StTr} for refined results. Moreover one can choose $f_e(s) = \exp(-2s)$ and find 
\begin{equation} \label{hornasympt2}
Z(t;\Omega_{f_e}) = \frac{1}{4 \pi t} + \frac{\ln t}{4 \sqrt{\pi t}} + O \lk t^{-\frac 12} \rk 
\end{equation}
as $t \to 0+$, see \cite{Ber02} and \cite{StTr}.

In order to derive universal bounds on $Z(t)$ in unbounded domains, let us first note that all results mentioned in the previous sections, in particular Theorem \ref{thmheat} and Corollary \ref{corheat} remain valid for unbounded domains $\Omega \subset \R^d$ as long as $\vo$ is finite. Even if the volume of $\Omega$ is infinite the estimate (\ref{zdirect}) holds true as long as  $\Omega_\La$ is finite. Moreover, one can use Proposition \ref{blysum} to estimate $R_\s(\La)$ and $Z(t)$ as long as 
\begin{equation} \label{intcond}
\int_{\frac{\pi}{\sqrt{\La}}}^\infty m_i(\tau;\Omega) \, d\tau < \infty
\end{equation}
for all $\La > 0$. This condition is satisfied for $i=d$ and a suitable choice of coordinate system $(x',x_d) \in \R^{d-1} \times \R$ whenever 
\begin{equation*}
m_d(\tau;\Omega) = o \lk \tau^{-1} \rk \ , \quad \tau \to \infty \, .
\end{equation*}
For example we can apply Proposition \ref{blysum} to horn-shaped regions introduced in \eqref{horn} with $f_\mu (s) = s^{-\frac 1\mu}$, $\mu > 0$.

\begin{thm}
For $\mu > 0$ and all $t>0$
\begin{align*}
Z(t;\Omega_{f_\mu}) \, \leq & \, \frac{4}{105 \pi^\frac 32} \frac{1}{\mu-1}  \lk \frac{2}{\pi^2} \rk^{\frac{\mu-1}{2}} t^{- \frac{\mu+1}{2}}  \, \Gamma \lk \frac \mu2 +4, \frac{\pi^2}{2} t \rk \, + \, \frac{1}{4  \pi t} \, \hat \Gamma \lk \frac 92 , \frac{\pi^2}{2}t \rk\\
&  \, + \frac{4}{105 \pi^\frac 32} \frac{\mu}{1-\mu} \lk \frac{2}{\pi^2} \rk^{\frac{1-\mu}{2\mu}}  t^{- \frac{1+\mu}{2\mu} } \,  \Gamma \lk \frac{1}{2\mu} +4, \frac{\pi^2}{2} t \rk 
\end{align*}
if $\mu \neq 1$ and 
\begin{eqnarray*}
Z(t;\Omega_{f_1}) & \leq & - \, \frac{\ln t}{4 \pi  t} \, \hat  \Gamma \lk \frac 92 , \frac {\pi^2}{2} t \rk - \frac{1}{4 \pi  t} \, (2 \ln \pi - \ln 2) \,  \hat \Gamma \lk \frac 92, \frac {\pi^2}{2} t \rk \\
&& + \, \frac{4}{105 \pi^{\frac 32} t} \, \int_{\frac{\pi^2}{2}t}^\infty s^{\frac 72} e^{-s} \ln s \, ds \, .
\end{eqnarray*}
\end{thm}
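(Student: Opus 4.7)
The plan is Neumann bracketing combined with Proposition \ref{blysum}. Adjoin Neumann walls along $\{x = 1\} \cup \{y = 1\}$; this partitions $\Omega_{f_\mu}$ (up to null sets) into the unit square $Q = (0,1)^2$, the horizontal arm $A_x = \{x > 1,\, 0 < y < x^{-1/\mu}\}$, and the vertical arm $A_y = \{y > 1,\, 0 < x < y^{-\mu}\}$, each carrying the induced mixed Dirichlet-Neumann conditions. Admitting the Neumann walls enlarges the form domain and therefore lowers the eigenvalues, so
\begin{equation*}
Z(t;\Omega_{f_\mu}) \ \leq \ Z(t;Q) + Z(t;A_x) + Z(t;A_y).
\end{equation*}
A separation of variables gives $\lambda_1(Q) = \pi^2/2$; since every $y$-section of $A_x$ has length at most one, the sectional Poincar\'e inequality $\|\partial_y u\|^2 \geq \pi^2 \|u\|^2$ yields $\lambda_1(A_x) \geq \pi^2$, and symmetrically $\lambda_1(A_y) \geq \pi^2$. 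Consequently $\lambda = \pi^2/2$ is a valid parameter in the reduced Laplace transform applied to each piece. The second term of the theorem is immediate: apply the Kac-type bound \eqref{Zlargetime} to $Q$ with $|Q| = 1$, $d = 2$, $\sigma = 5/2$ and $\lambda = \pi^2/2$.

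For $A_x$ I would apply Proposition \ref{blysum} with $\sigma = 5/2$ (so that $\delta_{\sigma,d} = 0$) and direction $i = 2$. Since the $y$-sections of $A_x$ are pure-Dirichlet intervals, the operator-valued Lieb-Thirring step in the proof of the Proposition carries through unchanged; the Neumann wall at $x = 1$ is handled in the perpendicular $x$-integration by even reflection across $x = 1$. A direct computation gives $m_2(\tau;A_x) = (\tau^{-\mu} - 1)_+$, supported in $\tau \in (0,1)$. For $\mu \neq 1$ I bound the primitive by its leading positive term, $\int_{\pi/\sqrt{\Lambda}}^{\infty} m_2(\tau;A_x)\,d\tau \leq \pi^{1-\mu}\Lambda^{(\mu-1)/2}/(\mu-1)$ (dropping the subleading contributions $-1 + \pi/\sqrt{\Lambda}$ of the primitive), insert this into Proposition \ref{blysum} to obtain $R_\sigma(A_x;\Lambda) \leq L^{cl}_{\sigma,2}\,\pi^{1-\mu}\,\Lambda^{\sigma + (\mu+1)/2}/(\mu-1)$, and perform the reduced Laplace transform at $\lambda = \pi^2/2$. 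The substitution $s = \Lambda t$ converts the lower limit $\pi^2/2$ into $\pi^2 t/2$ and, together with the identity $L^{cl}_{5/2,2}/\Gamma(7/2) = 4/(105\pi^{3/2})$, produces the first term of the theorem. The third term arises from the parallel treatment of $A_y$: the involution $(x,y)\mapsto(y,x)$ maps $A_x$ with parameter $\mu$ to $A_y$ with parameter $1/\mu$, so Proposition \ref{blysum} applied in direction $i = 1$ to $A_y$ returns the first-term formula with $\mu$ replaced by $1/\mu$.

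The case $\mu = 1$ is the degenerate limit in which $\int_{\pi/\sqrt{\Lambda}}^{1}(\tau^{-1} - 1)\,d\tau = \tfrac12\ln\Lambda - \ln\pi - 1 + \pi/\sqrt{\Lambda}$ contributes a $\ln\Lambda$ term. Feeding this into Proposition \ref{blysum} and performing the reduced Laplace transform of $\Lambda^{7/2}\ln\Lambda$ from $\pi^2/2$, the substitution $s = \Lambda t$ yields $t^{-9/2}\int_{\pi^2 t/2}^{\infty}s^{7/2}(\ln s - \ln t)e^{-s}\,ds$; after doubling to account for the two arms $A_x$ and $A_y$ this furnishes, respectively, the $-(\ln t)/(4\pi t)\cdot\hat\Gamma(9/2,\pi^2 t/2)$ term, the constant times $\hat\Gamma$ term (with the stated coefficient $-(2\ln\pi - \ln 2)$ emerging from combining the $\ln\pi$, the constants $-1$, and the $\ln 2$ from the substitution), and the integral $\int_{\pi^2 t/2}^{\infty}s^{7/2}e^{-s}\ln s\,ds$. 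The main obstacle is verifying the mixed-BC extension of Proposition \ref{blysum}: since direction $i$ is always chosen perpendicular to the Neumann wall the sectional Sturm-Liouville problems remain pure Dirichlet, and the transverse integration is handled by reflection, but this adaptation of the proof should be spelled out carefully. A secondary bookkeeping point is tracking the factor $(2/\pi^2)^{(\mu-1)/2}$, which appears naturally once the powers of $\pi$ coming from $m_2$ are combined with the lower limit $\pi^2 t/2$ of the resulting incomplete Gamma function.
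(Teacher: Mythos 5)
Your decomposition is genuinely different from the paper's argument, and it has two gaps that I do not think can be repaired along this route. The paper does not bracket at all: it applies Proposition \ref{blysum} once, in coordinates rotated by $\pi/4$, using that every section of $\Omega_{f_\mu}$ in the diagonal direction has length at most $\sqrt 2$ --- this is what yields $R_{5/2}(\Lambda)=0$ for $\Lambda\le\pi^2/2$, hence all the cutoffs at $\pi^2 t/2$ and the exponential decay --- together with the bound $m_2(\tau)\le 2^{(\mu-1)/2}\tau^{-\mu}+2^{(1-\mu)/(2\mu)}\tau^{-1/\mu}$ for $\tau\in(0,\sqrt2)$. The three terms of the theorem are the three pieces of $\int_{\pi/\sqrt\Lambda}^{\sqrt2}m_2(\tau)\,d\tau=1+\tfrac{1}{\mu-1}\pi^{1-\mu}(2\Lambda)^{(\mu-1)/2}-\tfrac{\mu}{\mu-1}\pi^{1-1/\mu}(2\Lambda)^{(1-\mu)/(2\mu)}$; in particular the coefficient $1$ of the $\hat\Gamma(9/2,\cdot)/(4\pi t)$ term comes from an exact cancellation of the endpoint constants of the two tail integrals, not from a unit square. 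Your bookkeeping cannot reproduce this structure: for $\mu>1$ the arm $A_y$ has finite area $1/(\mu-1)$ and a positive heat trace, so it cannot be bounded by ``the first-term formula with $\mu$ replaced by $1/\mu$'' --- that is exactly the theorem's third term, which is negative for $\mu>1$. What Proposition \ref{blysum} actually gives for $A_y$ is $\int_{\pi/\sqrt\Lambda}^{1}(\tau^{-1/\mu}-1)\,d\tau=\tfrac{1}{\mu-1}-\tfrac{\mu}{\mu-1}\pi^{(\mu-1)/\mu}\Lambda^{(1-\mu)/(2\mu)}+\pi\Lambda^{-1/2}$, whose leading constant produces an extra positive term $\tfrac{1}{\mu-1}\cdot\tfrac{1}{4\pi t}\hat\Gamma(9/2,\pi^2t/2)$ (plus an $O(t^{-1/2})$ piece from $\pi\Lambda^{-1/2}$) with no counterpart in the statement. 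Since your first term is smaller than the paper's only by the factor $2^{(\mu-1)/2}$ while these extra terms are of different orders in $t$, the bound you obtain is incomparable with the stated one and does not imply it.

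The second gap is the mixed boundary conditions. Proposition \ref{blysum} and \eqref{Zlargetime} are Dirichlet results: their proofs extend the quadratic form by zero across $\partial\Omega$, which is impossible across a Neumann wall. Your proposed even reflection across $x=1$ replaces $A_x$ by its symmetric double, whose distribution function is $2(\tau^{-\mu}-1)_+$; discarding the antisymmetric part of the doubled Dirichlet spectrum only gives $Z^{DN}(t;A_x)\le Z^{D}(t;\mbox{double})$, so the arm terms acquire a factor $2$, which already exceeds the available slack $2^{(\mu-1)/2}$ whenever $\mu<3$. Likewise, the Berezin--Li--Yau bound for the Dirichlet--Neumann square is not a corollary of \eqref{beliyau} (the counting function of the mixed problem exceeds the Weyl term for small $\Lambda$), so applying \eqref{Zlargetime} to $Q$ ``immediately'' is not justified. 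The rotated-coordinate argument avoids both problems at once: no artificial boundaries are introduced, a single distribution function carries both tails, and the bound $p_2\le\sqrt2$ supplies the spectral parameter $\lambda=\pi^2/2$ for free.
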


\begin{proof}
In order to apply Proposition \ref{blysum} choose a coordinate system $(x_1,x_2) \in \R^2$ rotated by $\frac \pi4$ with respect to the coordinate system $(x,y)$ used in definition \eqref{horn}. Then for $x_1 = 0$ we have
\begin{equation*}
p_2 \lk 0;\Omega_{f_\mu} \rk = \left| \left\{ x_2 : (0,x_2) \in \Omega_{f_\mu} \right\} \right|_1 = \sqrt{2}
\end{equation*}
and we find that  $m_d \lk \tau;\Omega_{f_\mu} \rk =0$ for all $\tau \geq \sqrt{2}$.
Moreover, we can estimate
\begin{align*}
p_2 \lk x_1;\Omega_{f_\mu} \rk \ & \leq \ \sqrt{2} \ f_\mu \lk \sqrt 2 \, x_1 \rk &  \mbox{if} \  x_1 > 0  & \qquad \mbox{and}\\
p_2 \lk x_1;\Omega_{f_\mu} \rk \ & \leq \ \sqrt{2} \ f_\mu^{-1} \lk \sqrt 2 \, |x_1| \rk & \mbox{if} \ x_1 < 0 & \, ,
\end{align*}
hence 
\begin{equation*}
m_d \lk \tau;\Omega_{f_\mu} \rk \, \leq \, 2^{\frac{\mu-1}{2}} \tau^{-\mu} + 2^{\frac{1-\mu}{2\mu}} \tau^{-\frac 1\mu}
\end{equation*}
for all $0< \tau < \sqrt{2}$. Inserting these estimates into the inequality from Proposition \ref{blysum} with $\s = \s_2 = 5/2$ yields 
$$R_\frac 52(\La) \, = \, 0 \quad \textnormal{for all} \ 0 < \La \leq \frac{\pi^2}{2}$$
and for $\La > \frac{\pi^2}{2}$ we get
\begin{equation*}
R_\frac 52(\La) \, \leq \, \frac{1}{14 \pi} \lk 1- \frac{1}{1-\mu} \, \pi^{1-\mu} \, (2\La)^{\frac{\mu-1}{2}} - \frac{\mu}{\mu-1} \, \pi^{1-\frac 1\mu} \, (2\La)^{\frac{1-\mu}{2\mu}} \rk \La^\frac 72
\end{equation*}
if $\mu \neq 1$ and 
\begin{equation*}
R_\frac 52(\La) \, \leq \, \frac{1}{14 \pi} \lk  \ln \La + \ln 2 - 2 \ln \pi \rk \La^\frac 72
\end{equation*}
if $\mu = 1$. Finally by applying the Laplace transformation to these inequalities and simplifying the resulting estimates on $Z(t)$ we arrive at the claimed results. 
\end{proof}

\begin{rem}
Comparing these bounds with the asymptotic result \eqref{hornasympt1} we see that the main terms capture the correct order in $t$ as $t \to 0+$. In the case $\mu = 1$ the first term contains the sharp constant and even the second term is of correct order.
\end{rem}

To generalise these considerations to higher dimensions we use slightly different notions. Assume a non-negative function $m(\tau)$ is given for $\tau > 0$, right-continuous, non-increasing and satisfying $m(\tau) = o(\tau^{-1})$ as $t \to \infty$. Choose
\begin{equation*}
f(s) = \inf \left\{ \tau > 0 : m(\tau) \leq \omega_{d-1} s^{d-1} \right\} \, ,
\end{equation*}
where $\omega_{d-1} = \pi^{\frac{d-1}{2}} \Gamma \lk \frac{d+1}{2} \rk^{-1}$ denotes the volume of the unit ball in $\R^{d-1}$, and put
\begin{equation*}
\tilde \Omega_f = \left\{ (x',x_d) \in \R^{d-1} \times \R : \left| x_d \right| < \frac 12 f \lk \left| x' \right| \rk \right\} \, .
\end{equation*}
Then $\tilde \Omega_f$ represents an example of a domain with the distribution function 
$$m_d(\tau; \tilde \Omega_{f}) \, = \, m(\tau) \, .$$
In this case, to study explicit examples we choose $f_\mu (s) = \lk \omega_{d-1} s^{d-1} \rk^{-\frac1 \mu}$. 

\begin{thm}
\label{thmhorn2}
For any $\mu > 1$ and all $t > 0$
\begin{equation*}
Z(t;\tilde \Omega_{f_\mu}) \leq \frac{1}{(4 \pi)^{\frac d2}} \frac{\pi^{1-\mu}}{\mu-1} \frac{\Gamma \lk \s_d + \frac{d+\mu+1}{2} \rk }{\Gamma  \lk \s_d +\frac d2 +1\rk } \ t^{-\frac{d-1+\mu}{2}} \,. 
\end{equation*}
\end{thm}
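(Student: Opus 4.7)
The plan is to apply Proposition \ref{blysum} directly to $\tilde\Omega_{f_\mu}$ along the $d$-th coordinate axis with $\sigma = \sigma_d$, and then pass back to $Z(t)$ via the identity $Z(t) = \frac{t^{\sigma+1}}{\Gamma(\sigma+1)}\mathcal{L}[R_\sigma](t)$ recorded at the start of Section \ref{secthmheat}. The first task is to compute the distribution function $m_d(\tau;\tilde\Omega_{f_\mu})$. Each section $\tilde\Omega_{f_\mu}(x') = \lk -\frac12 f_\mu(|x'|), \frac12 f_\mu(|x'|) \rk$ has total length $p_d(x') = f_\mu(|x'|)$; since $f_\mu$ is decreasing, the super-level set $\{x' : p_d(x') > \tau\}$ is the $(d-1)$-dimensional ball of radius $f_\mu^{-1}(\tau)$, and inverting $f_\mu(s) = (\omega_{d-1} s^{d-1})^{-1/\mu}$ gives $m_d(\tau;\tilde\Omega_{f_\mu}) = \omega_{d-1}(f_\mu^{-1}(\tau))^{d-1} = \tau^{-\mu}$. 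The hypothesis $\mu > 1$ is exactly what makes the tail integrable, guaranteeing (\ref{intcond}) and $m_d(\tau) = o(\tau^{-1})$ at infinity.

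Second, I apply Proposition \ref{blysum} with $i = d$ and $\sigma = \sigma_d$. Since $\sigma_d + (d-1)/2 \geq 3$ for every $d \geq 2$, the remark after the proposition (together with Lemma \ref{lemeps}) yields $\delta_{\sigma_d,d} = 0$, so the correction term drops and the bound reduces to
$$R_{\sigma_d}(\Lambda) \, \leq \, L^{cl}_{\sigma_d,d}\,\Lambda^{\sigma_d+d/2} \int_{\pi/\sqrt{\Lambda}}^\infty \tau^{-\mu}\,d\tau \, = \, L^{cl}_{\sigma_d,d}\,\frac{\pi^{1-\mu}}{\mu-1}\,\Lambda^{\sigma_d + (d+\mu-1)/2}.$$
Inserting this into the reduced Laplace-transform identity with $\lambda = 0$ and evaluating the resulting Gamma integral produces
$$Z(t;\tilde\Omega_{f_\mu}) \, \leq \, \frac{L^{cl}_{\sigma_d,d}}{\Gamma(\sigma_d+1)}\,\frac{\pi^{1-\mu}}{\mu-1}\,\Gamma\lk \sigma_d + \frac{d+\mu+1}{2} \rk\, t^{-(d+\mu-1)/2},$$
and expanding $L^{cl}_{\sigma_d,d} = \Gamma(\sigma_d+1)/\bigl((4\pi)^{d/2}\Gamma(\sigma_d+d/2+1)\bigr)$ matches the constant in the statement verbatim and gives the exponent $-(d+\mu-1)/2 = -(d-1+\mu)/2$.

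Once the geometric identification of $m_d$ is done, the rest is bookkeeping, so the only conceptual concern is justifying Proposition \ref{blysum} when $|\tilde\Omega_{f_\mu}| = \infty$. As the paper's discussion around (\ref{intcond}) emphasizes, the proof of Proposition \ref{blysum} proceeds purely through the operator-valued Lieb--Thirring inequality applied fibrewise and never uses finiteness of $\vo$; the only requirement is that the right-hand side be finite, which is precisely the role played by $\mu > 1$. This also clarifies why $\mu = 1$ (the logarithmic horn of \eqref{hornasympt1}) and $\mu < 1$ fall outside this method and would have to be treated by different means.
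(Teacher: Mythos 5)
Your proposal is correct and follows exactly the route of the paper's own proof: identify $m_d(\tau;\tilde\Omega_{f_\mu})=\tau^{-\mu}$, apply Proposition \ref{blysum} with $\sigma=\sigma_d$ (where $\delta_{\sigma_d,d}=0$), and Laplace-transform the resulting Riesz-mean bound; your constant and exponent bookkeeping checks out. The remark that Proposition \ref{blysum} needs only finiteness of the tail integral, not of $\vo$, is exactly the justification the paper gives around condition (\ref{intcond}).
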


\begin{proof}
The definition of $\tilde \Omega_{f_\mu}$ and the choice of $f_\mu$ implies $m_d(\tau, \Omega_{f_\mu}) = \tau^{- \mu}$.
Hence, we can employ Proposition \ref{blysum} with $\s = \s_d$ and find
\begin{equation*}
R_{\s_d}(\La) \, \leq \, L^{cl}_{\s_d,d} \int_{\frac{\pi}{\sqrt{\La}}}^\infty \tau^{-\mu} \, d\tau \, \La^{\s_d+\frac d2} \, = \, L^{cl}_{\s_d,d} \frac{\pi^{1-\mu}}{\mu-1} \La^{\s_d+\frac{d+\mu-1}{2}} \, .
\end{equation*}
To this inequality we can apply the Laplace transformation and simplify the resulting bound on $Z(t)$ to arrive at the claimed result.
\end{proof}

\begin{rem}
In dimension $d=2$, according to \cite{Sim01, Ber02, StTr}, the asymptotics \eqref{hornasympt1} and \eqref{hornasympt2} are valid for $\tilde \Omega_{f}$ as well. In this case the bound from Theorem \ref{thmhorn2} reads as
\begin{equation*}
Z(t; \tilde \Omega_{f_\mu}) \, \leq \, \frac{4}{105 \, \pi^{\mu+\frac12}} \, \frac{\Gamma \lk 4 + \frac{\mu}{2} \rk }{\mu-1} \ t^{-\frac{\mu+1}{2}} \, .
\end{equation*}
In view of (\ref{hornasympt1}) this bound shows again the correct order in $t$ as $t \to 0+$. Moreover, if we compare the constants 
\begin{equation*}
b_1(\mu) := \frac{4}{105 \, \pi^{\mu + \frac 12}} \frac{\Gamma \lk 4 + \frac{\mu}{2} \rk }{\mu-1} \quad \mbox{and} \quad b'_1(\mu) := \frac{\Gamma \lk 1 + \frac \mu2 \rk \zeta(\mu)}{2 \, \pi^{\mu+\frac 12}}
\end{equation*}
from the bound above and the asymptotics (\ref{hornasympt1}) we find 
\begin{equation*}
\lim_{\mu \rightarrow 1+} \lk \frac{b_1(\mu)}{b'_1(\mu)} \rk = \lim_{\mu \rightarrow 1+} \lk \frac{(\mu+6)(\mu+4)(\mu+2)}{(\mu-1) \zeta(\mu)} \rk = 1 \, .
\end{equation*}
\end{rem}

In order to state an example for  unbounded domains with finite volume, choose $f_e (s) = \exp \lk - \omega_{d-1} s^{d-1} \rk$. In the same way as above one can show
\begin{thm}
For all $t>0$ the estimate
\begin{align*}
Z(t,\tilde \Omega_{f_e}) &\leq  \frac{1}{(4\pi t)^{\frac d2}} \hat \Gamma \lk \s_d + \frac d2 +1 , \pi^2 t \rk + \frac{\sqrt \pi \, \ln t}{4(4 \pi t)^{\frac {d-1}{2}}} \frac{\Gamma \lk \s_d + \frac{d+1}{2}, \pi^2 t\rk }{\Gamma \lk \s_d + \frac d2 +1 \rk} \\
& \quad + \frac{\sqrt{\pi} \lk \ln \pi - 1 \rk }{2 (4 \pi t)^{\frac{d-1}{2}}} \frac{\Gamma \lk \s_d + \frac{d+1}{2}, \pi^2 t \rk}{\Gamma \lk \s_d + \frac d2 +1 \rk} \\
& \quad + \frac{\sqrt{\pi}}{4 (4 \pi t)^{\frac{d-1}{2}}} \frac{1}{\Gamma \lk \s_d + \frac d2 +1 \rk}  \int_{\pi^2 t}^\infty s^{\s_d + \frac d2} \ln s \,  e^{-s} \, ds 
\end{align*}
holds true.
\end{thm}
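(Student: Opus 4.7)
The plan is to mimic the proofs of the previous two theorems on horn-shaped regions and on $\tilde\Omega_{f_\mu}$: reduce first to a bound on $R_{\sigma_d}(\Lambda)$ by applying Proposition \ref{blysum}, then invoke the reduced Laplace transformation to pass from this bound on $R_{\sigma_d}$ to one on $Z(t;\tilde\Omega_{f_e})$.

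First I would determine the distribution function $m_d(\tau;\tilde\Omega_{f_e})$. By the very construction of $\tilde\Omega_f$ preceding the statement, $m_d(\cdot;\tilde\Omega_f)$ equals the function $m$ from which $f$ was built. Inverting the relation $\tau = f_e(s) = \exp(-\omega_{d-1} s^{d-1})$ gives $\omega_{d-1}s^{d-1} = -\ln\tau$, so
\begin{displaymath}
m_d(\tau;\tilde\Omega_{f_e}) \,=\, \max(-\ln\tau,\,0), \qquad \tau>0.
\end{displaymath}
In particular this distribution function is supported on $(0,1]$ and an integration by parts gives $|\tilde\Omega_{f_e}| = \int_0^1 (-\ln\tau)\,d\tau = 1 < \infty$, so $\tilde\Omega_{f_e}$ is unbounded yet has finite volume.

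Next I would apply Proposition \ref{blysum} with $\sigma=\sigma_d$, which suppresses the boundary term because $\delta_{\sigma_d,d}=0$. For $\Lambda\leq\pi^2$ the integration range $[\pi/\sqrt{\Lambda},\infty)$ meets no mass of $m_d$, so $R_{\sigma_d}(\Lambda)=0$. For $\Lambda>\pi^2$ a direct antiderivative computation gives
\begin{displaymath}
\int_{\pi/\sqrt\Lambda}^1 (-\ln\tau)\,d\tau \,=\, 1 \,+\, \frac{\pi(\ln\pi-1)}{\sqrt\Lambda} \,-\, \frac{\pi\ln\Lambda}{2\sqrt\Lambda},
\end{displaymath}
and multiplying by $L^{cl}_{\sigma_d,d}\,\Lambda^{\sigma_d+d/2}$ yields the required upper bound on $R_{\sigma_d}(\Lambda)$.

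Finally I would invoke the identity $Z(t)=t^{\sigma_d+1}\Gamma(\sigma_d+1)^{-1}\mathcal{L}[R_{\sigma_d}](t)$ together with the monotonicity of $\mathcal{L}$ on positive functions, and integrate term by term from $\pi^2$ to infinity. The substitution $s=\Lambda t$ turns the first two pieces into normed incomplete Gamma functions involving $\hat\Gamma(\sigma_d+d/2+1,\pi^2 t)$ and $\Gamma(\sigma_d+(d+1)/2,\pi^2 t)/\Gamma(\sigma_d+d/2+1)$, producing the first and third terms of the statement once one uses $L^{cl}_{\sigma_d,d}/\Gamma(\sigma_d+1) = 1/[(4\pi)^{d/2}\Gamma(\sigma_d+d/2+1)]$ together with $(4\pi)^{d/2} = 2\sqrt{\pi}\,(4\pi)^{(d-1)/2}$. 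The delicate $\ln\Lambda$ contribution is treated by writing $\ln\Lambda = \ln s - \ln t$: the $-\ln t$ piece combines cleanly with the remaining power of $s$ into an incomplete Gamma function, producing the second term, while the $\ln s$ piece leaves a residual integral of the form $\int_{\pi^2 t}^\infty s^{\cdots}\ln s\,e^{-s}\,ds$, producing the fourth term. The main obstacle is bookkeeping the constants through the Laplace transform of the logarithmic term, ensuring that each contribution collects cleanly into the form announced in the statement.
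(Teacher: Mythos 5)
Your proposal is correct and follows exactly the route the paper intends (the paper itself only says the result is obtained ``in the same way as above''): identify $m_d(\tau;\tilde\Omega_{f_e})=(-\ln\tau)_+$, note $R_{\sigma_d}(\Lambda)=0$ for $\Lambda\le\pi^2$, apply Proposition \ref{blysum} with $\sigma=\sigma_d$ so that $\delta_{\sigma_d,d}=0$, and Laplace-transform the resulting bound on $R_{\sigma_d}$, splitting $\ln\Lambda=\ln s-\ln t$. One caveat about the bookkeeping you deferred: since the $\ln\Lambda$-term in your bound on $R_{\sigma_d}$ carries the prefactor $-\frac{\pi}{2}\Lambda^{\sigma_d+\frac{d-1}{2}}$, the residual piece comes out as $-\frac{\sqrt{\pi}}{4(4\pi t)^{\frac{d-1}{2}}\,\Gamma\left(\sigma_d+\frac d2+1\right)}\int_{\pi^2 t}^\infty s^{\sigma_d+\frac{d-1}{2}}\ln s\, e^{-s}\,ds$, i.e.\ with a minus sign and exponent $\sigma_d+\frac{d-1}{2}$ rather than the $+\,s^{\sigma_d+\frac d2}$ printed in the theorem, so your faithful completion actually yields a (slightly stronger) corrected form of the stated fourth term.
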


\begin{rem}
In view of (\ref{ztasymp}) the first term of this bound is sharp in the limit $t \rightarrow 0+$ since $\left| \Omega_{f_e} \right| = 1$. In dimension $d=2$ we can use (\ref{incplgamma}) and (\ref{hornasympt2}) to point out that even the second term of the bound captures the right order in $t$ as $t$ tends to zero.
\end{rem}

\section{Proof of Theorem \ref{thmriesz}} \label{secthmriesz}

Here we use the results from section \ref{notation} to derive universal bounds with correction terms on the Riesz means $R_\s(\La)$. First we note that Proposition \ref{blysum} and Lemma \ref{mest} immediately imply the following estimate. Recall that $\tau_\Omega = d^2 \pi^2 \vo^{-\frac 2d}$ and let $\s \geq \frac 32$ satisfy
$\sigma + \frac{d-1}{2} \geq 3$,
hence $\delta_{\s,d}=0$. Then for any open domain $\Omega \subset \R^d$ and all $\La > 0$ we find
\begin{align*}
R_\s (\La) \ &\leq \ L^{cl}_{\s,d} \, \frac{d-1}{d} \, \vo \, \La^{\s+\frac d2} &\mbox{if} \ \La < \tau_\Omega  & \quad \mbox{and} \\
R_\s (\La) \ &\leq \ L^{cl}_{\s,d} \, \vo \, \La^{\s+\frac d2} - \pi \, L^{cl}_{\s,d} \, \vo^{\frac{d-1}{d}} \La^{\s+\frac{d-1}{2}} &\mbox{if} \ \La \geq \tau_\Omega & .
\end{align*}

Next we discuss, how a trick by  Aizenmann and Lieb \cite{AiLi}
can be applied to inequalities for eigenvalue means $R_\g(\La)$ with remainder terms.

\begin{lem} \label{rieszlem}
Let $\gamma > \sigma \geq \frac32$, $\lambda_1\geq\lambda\geq 0$ and 
$\Lambda \geq \lambda$. Then 
\begin{eqnarray}
\notag
R_\g (\Lambda) & \leq & L^{cl}_{\g,d} \, \vo \, \Lambda^{\g+\frac{d}{2}} \hat B \lk  \frac{\lambda}{\La}, \s + \frac{d}{2} +1 , \g -\s \rk \\
\notag
&& - \, \frac{L^{cl}_{\s,d}}{B \lk \s+1, \g- \s  \rk} \int_0^{\La - \lambda} \tau^{\g-\s-1} M \lk \frac{\pi}{\sqrt{\La-\tau}};\Omega \rk \lk \La-\tau \rk^{\s+\frac{d}{2}} d\tau\\
&& + \, \frac{\delta_{\s,d}}{B \lk \s+1, \g- \s \rk} \int_0^{\La - \lambda} \tau^{\g-\s-1} 
m \lk \frac{\pi}{\sqrt{\La-\tau}};\Omega \rk \lk \La-\tau \rk^{\s+\frac{d-1}{2}} d\tau.
\label{eq:lm4}
\end{eqnarray}
\end{lem}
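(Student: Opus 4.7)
\textbf{Proof plan for Lemma \ref{rieszlem}.} The plan is to apply the Aizenman--Lieb averaging trick \cite{AiLi} to the refined Berezin--Li--Yau inequality \eqref{blysumeq} obtained after Proposition \ref{blysum}. The starting point is the elementary identity
\begin{equation*}
(x)_+^\gamma \,=\, \frac{1}{B(\sigma+1,\gamma-\sigma)} \int_0^\infty \tau^{\gamma-\sigma-1} (x-\tau)_+^\sigma \, d\tau,\qquad \gamma>\sigma,
\end{equation*}
which follows from the Beta integral after substitution. Summing over $k$ with $x=\Lambda-\lambda_k$ yields
\begin{equation*}
R_\gamma(\Lambda) \,=\, \frac{1}{B(\sigma+1,\gamma-\sigma)} \int_0^\infty \tau^{\gamma-\sigma-1} R_\sigma(\Lambda-\tau) \, d\tau.
\end{equation*}
Since $\lambda \leq \lambda_1$, the integrand vanishes whenever $\Lambda-\tau < \lambda$, so the upper limit may be replaced by $\Lambda-\lambda$.

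Next I would insert the bound \eqref{blysumeq} applied at the shifted spectral parameter $\Lambda-\tau$, namely
\begin{equation*}
R_\sigma(\Lambda-\tau) \,\leq\, L^{cl}_{\sigma,d}\bigl(\vo - M(\pi/\sqrt{\Lambda-\tau};\Omega)\bigr)(\Lambda-\tau)^{\sigma+d/2} + \delta_{\sigma,d}\, m(\pi/\sqrt{\Lambda-\tau};\Omega)(\Lambda-\tau)^{\sigma+(d-1)/2}.
\end{equation*}
Distributing the three summands under the integral produces three terms. The second and third already match, up to the common prefactor $B(\sigma+1,\gamma-\sigma)^{-1}$, the two remainder integrals appearing on the right-hand side of \eqref{eq:lm4}.

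The main step is then to identify the first term with the claimed leading contribution. After substituting $u=\tau/\Lambda$ the $\vo$-integral becomes
\begin{equation*}
\vo\,\Lambda^{\gamma+d/2}\int_0^{1-\lambda/\Lambda} u^{\gamma-\sigma-1}(1-u)^{\sigma+d/2}\,du,
\end{equation*}
and the substitution $t=1-u$ converts this into $\vo\,\Lambda^{\gamma+d/2} B(\sigma+d/2+1,\gamma-\sigma)\,\hat B(\lambda/\Lambda,\sigma+d/2+1,\gamma-\sigma)$ by the definition of the normed incomplete Beta-function. It remains to verify the constant identity
\begin{equation*}
\frac{L^{cl}_{\sigma,d}}{B(\sigma+1,\gamma-\sigma)}\, B\!\left(\sigma+\tfrac{d}{2}+1,\gamma-\sigma\right) \,=\, L^{cl}_{\gamma,d},
\end{equation*}
which reduces via the Gamma-function expression for $L^{cl}_{\sigma,d}$ to a telescoping of $\Gamma$-ratios. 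With this the first term assumes the form stated in \eqref{eq:lm4}.

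There is no serious obstacle here: the only points requiring care are (i) truncating the $\tau$-integral at $\Lambda-\lambda$ using $\lambda\leq\lambda_1$, and (ii) bookkeeping the Beta/Gamma identity that converts the constant $L^{cl}_{\sigma,d}/B(\sigma+1,\gamma-\sigma)\cdot B(\sigma+d/2+1,\gamma-\sigma)$ into $L^{cl}_{\gamma,d}$. Everything else is substitution and linearity of the integral, and the hypothesis $\sigma\geq 3/2$ is used exactly to legitimize the application of Proposition \ref{blysum}.
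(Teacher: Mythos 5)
Your proposal is correct and follows essentially the same route as the paper: the Aizenman--Lieb identity $R_\gamma(\Lambda)=B(\sigma+1,\gamma-\sigma)^{-1}\int_0^{\Lambda-\lambda}\tau^{\gamma-\sigma-1}R_\sigma(\Lambda-\tau)\,d\tau$ (truncated using $R_\sigma(\tilde\Lambda)=0$ for $\tilde\Lambda\leq\lambda\leq\lambda_1$), insertion of the averaged bound \eqref{blysumeq} at the shifted parameter, and the substitution plus the Beta/Gamma identity $L^{cl}_{\sigma,d}\,B(\sigma+\tfrac d2+1,\gamma-\sigma)/B(\sigma+1,\gamma-\sigma)=L^{cl}_{\gamma,d}$ to identify the leading term. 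All steps check out and match the paper's proof.
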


\begin{proof}
We start from the well-known identity \cite{AiLi}
\begin{eqnarray} \notag
R_\gamma (\La) &=& 
\frac{1}{B \lk  \s+1, \gamma- \s \rk} \int_0^\infty \tau^{\gamma-\s-1} \, R_\s(\La - \tau) \, d\tau
\\
\label{eq:LiAi} &=& \frac{1}{B \lk \s+1, \gamma- \s \rk} \int_0^{\La - \lambda} \tau^{\gamma-\s-1} \, R_{\s}(\La - \tau) \, d\tau.
\end{eqnarray}
Here we have taken into account that $R_\s(\tilde\Lambda)=0$ for $\tilde\Lambda\leq \lambda\leq\lambda_1$.
Now we can apply Proposition \ref{blysum} and find
\begin{eqnarray*}
R_\g (\La) & \leq & \frac{L^{cl}_{\s,d} \,\vo }{B \lk \s+1, \g- \s \rk} \,\int_0^{\La - \lambda} \tau^{\g-\s-1} \,\lk \La-\tau \rk^{\s+\frac{d}{2}} \,d\tau \\
&& - \,\frac{L^{cl}_{\s,d}}{B \lk \s+1, \g- \s  \rk} \int_0^{\La - \lambda} \tau^{\g-\s-1}  M \lk \frac{\pi}{\sqrt{\La-\tau}};\Omega \rk \lk \La-\tau \rk^{\s+\frac{d}{2}} d\tau\\
&& + \,\frac{\delta_{\s,d}}{B \lk \s+1, \g- \s \rk} \int_0^{\La - \lambda} \tau^{\g-\s-1}  m \lk \frac{\pi}{\sqrt{\La-\tau}};\Omega \rk \lk \La-\tau \rk^{\s+\frac{d-1}{2}} d\tau.
\end{eqnarray*}
Finally let us evaluate the first term on the right hand side of this expression. A substitution of the integration variable
$s = \frac{\tau}{\La}$ gives 
\begin{eqnarray*}
&&\lefteqn{ \frac{L^{cl}_{\s,d} \vo}{B\lk \s+1, \g- \s \rk} \,\La^{\g+\frac{d}{2}} \,\int_0^{1 - \frac{\lambda}{\La}} s^{\g-\s-1} \,(1-s)^{\s+\frac{d}{2}} \,ds } \\
& = & \vo \,\La^{\g+\frac{d}{2}} 
L^{cl}_{\s,d}\frac{B(\g-\s,\s+\frac{d}{2}+1)}{B(\s+1,\g-\s)} 
\lk 1 - \frac{\int_0^{\frac{\lambda}{\La}} (1-t)^{\g-\s-1} t^{\s+\frac{d}{2}} dt }{B(\g-\s,\s+\frac{d}{2}+1)} \rk \\
& = & \vo \,\La^{\g+\frac{d}{2}} L^{cl}_{\g,d} \,\lk 1 - \textstyle \B \lk \frac{\lambda}{\La},\s+\frac{d}{2}+1 ,\g-\s \rk \rk.
\end{eqnarray*}
\end{proof}

If we apply this Lemma with $\sigma=\sigma_d$ then because of $\delta_{\sigma_d,d}=0$ the
last term on the right hand side of \eqref{eq:lm4} vanishes. This enables us to finish the proof of
Theorem \ref{thmriesz}.

\begin{proof}[Proof of Theorem \ref{thmriesz}]
Inequality \eqref{eq:lm4} with $\gamma >\s=\s_d$ and with a substitution 
$y = \frac{\pi}{\sqrt{\La - \tau}}$ gives
\begin{eqnarray*}
R_\g (\Lambda) &\leq &
L^{cl}_{\g,d} \,\vo \,\Lambda^{\g+\frac{d}{2}}  \hat B \lk 
 \frac{\lambda}{\La}, \s_d+\frac{d}{2}+1, \g-\s_d \rk \nonumber \\
&& -\frac{2 \pi^{2\s_d+d+2}L^{cl}_{\s_d,d}} {B \lk \s_d +1, \g - \s_d \rk}  \, \int_{\frac{\pi}{\sqrt{\La}}}^{\frac{\pi}{\sqrt{\lambda}}} \lk \La - \frac{\pi^2}{y^2} \rk^{\gamma-\s_d-1} M(y;\Omega) \,y^{-2\s_d-d-3} \,dy 
\end{eqnarray*}
for all $\Lambda \geq \lambda$. First we assume $\lambda \geq \tau_\Omega$, i.e. $\frac{\pi}{\sqrt {\lambda}} \leq \frac 1d \vo^{1/d}$. Then we have $y \,\vo^{\frac{d-1}{d}} \leq \frac{1}{d} \vo$ for all $\frac{\pi}{\sqrt{\La}} \leq y \leq \frac{\pi}{\sqrt{\lambda}}$ and in view of Lemma \ref{mest} we get
\begin{eqnarray*}
 R_\g (\Lambda) &\leq & L^{cl}_{\g,d} \,\vo \,\Lambda^{\g+\frac{d}{2}}  \hat B \lk  \frac{\lambda}{\La}, \s_d+\frac{d}{2}+1, \g-\s_d \rk  \nonumber \\
&& - \,\frac{2 \pi^{2\s_d+d+2}L^{cl}_{\s_d,d}} {B \lk \s_d +1, \g - \s_d \rk}  \,\vo^{\frac{d-1}{d}}  \int_{\frac{\pi}{\sqrt{\La}}}^{\frac{\pi}{\sqrt{\lambda}}} \lk \La - \frac{\pi^2}{y^2} \rk^{\gamma-\s_d-1} \,y^{-2\s_d-d-2} \,dy \,.
\end{eqnarray*}
If we substitute $s = \frac{\pi^2}{y^2 \La}$ and simplify 
the expression of the remainder term we arrive at
\begin{displaymath}
R_\g ( \La) \,\leq \,L^{cl}_{\g,d} \,\vo \,\La^{\g+\frac{d}{2}}  
\hat B \lk \frac{\lambda}{\La} , \s_d+\frac{d}{2}+1 , \g - \s_d \rk  - S(\La,\lambda)
\end{displaymath}
with $S(\Lambda,\lambda)$ as stated in \eqref{a1}.

Next we assume $\lambda < \tau_\Omega$ and proceed in two steps. If 
at the same time $\La<\tau_\Omega$, that means $\frac{\pi}{\sqrt{\La}} > \frac 1d \vo^{1/d}$, we have $y \,\vo^{\frac{d-1}{d}} > \frac 1d \vo$ for all $\frac{\pi}{\sqrt{\La}} < y \frac{\pi}{\sqrt{\lambda}}$ and by similar calculations as above we arrive at 
\begin{eqnarray*}
 R_\g (\Lambda) &\leq & L^{cl}_{\g,d} \,\vo \,\Lambda^{\g+\frac{d}{2}} \hat B \lk  \frac{\lambda}{\La}, \s_d+\frac{d}{2}+1, \g-\sigma_d \rk  \\
&& - \,\frac{2 \pi^{2\s_d+d+2}L^{cl}_{\s_d,d}} {B \lk \s_d +1, \g - \s_d \rk}  \,\frac{\vo}{d} \int_{\frac{\pi}{\sqrt{\La}}}^{\frac{\pi}{\sqrt{\lambda}}} \lk \La - \frac{\pi^2}{y^2} \rk^{\gamma-\s_d-1} \,y^{-2\s_d-d-3} \,dy
\end{eqnarray*}
and 
we obtain the claimed inequality with 
$S(\La,\lambda)$ given in \eqref{a2}.
On the other hand, if $\lambda < \tau_\Omega\leq \La $  we get
\begin{eqnarray*}
 R_\g (\Lambda) &\leq & L^{cl}_{\g,d} \,\vo \,\Lambda^{\g+\frac{d}{2}} \hat B \lk  \frac{\lambda}{\La}, \s_d+\frac{d}{2}+1, \g-\sigma_d \rk  \\
&& - \,\frac{2 \pi^{2\s_d+d+2}L^{cl}_{\s_d,d}} {B \lk \s_d +1, \g - \s_d \rk}  \,\vo^{\frac{d-1}{d}} \int_{\frac{\pi}{\sqrt{\La}}}^{\frac{\vo^\frac{1}{d}}{d}} \lk \La - \frac{\pi^2}{y^2} \rk^{\gamma-\s_d-1} \,y^{-2\s_d-d-2} \,dy \\
&& - \,\frac{2 \pi^{2\s_d+d+2}L^{cl}_{\s_d,d}} {B \lk \s_d +1, \g - \s_d \rk}  \,\frac{\vo}{d} \int_{\frac{\vo^\frac{1}{d}}{d}}^{\frac{\pi}{\sqrt{\lambda}}} \lk \La - \frac{\pi^2}{y^2} \rk^{\gamma-\s_d-1} \,y^{-2\s_d-d-3} \,dy.
\end{eqnarray*}
In this case after a simplification we arrive at  
$S(\La,\lambda)$ as stated in \eqref{a3}.
Finally, if we apply \eqref{eq:LiAi} directly to \eqref{beliyau} we claim
\begin{displaymath}
R_\g ( \La) \ \leq \  L^{cl}_{\g,d} \, \vo \, \La^{\g+\frac{d}{2}} \, \hat B \lk \frac{\lambda}{\La} ,\s_d+\frac{d}{2}+1,\g - \s_d \rk \,.
\end{displaymath}
Hence, in the final bound $S(\Lambda,\lambda)$ can be replaced by its positive part
$(S(\Lambda,\lambda))_+$.
\end{proof}

Again in view of (\ref{rafakr}) we can choose
\begin{equation*}
\tilde\lambda =  \frac{\pi j_{\frac{d}{2}-1,1}^2}{\Gamma \lk \frac{d}{2}+1 \rk^{2/d} \vo^{2/d}}
\end{equation*}
as a lower bound on $\lambda_1$. Thus we find

\begin{cor} \label{corriesz}
Let $\Omega \subset \R^d$ be an open set with finite volume. Then for  $\g > \s_d$ the estimate  
\begin{displaymath}
R_\g ( \La) \,\leq \,L^{cl}_{\g,d} \, \vo \, \hat B \lk  \frac{\tilde \lambda}{\La}, \s_d + \frac d2+1,\g-\s_d \rk \La^{\g+\frac{d}{2}} \,- \, (S(\La))_+ 
\end{displaymath}
holds for all $\La \geq \tilde \lambda$, where 
\begin{equation*}
S(\La) \, =  \, L^{cl}_{\g,d} \,\vo \,\La^{\g+\frac{d}{2}} \frac{1}{d} \, \hat B \lk \frac{\tilde \lambda}{\Lambda},\s_d+\frac d2 + 1, \g - \s_d \rk
\end{equation*}
if $\La < \tau_\Omega$ and
\begin{align*}
S(\La) \, = \, & L^{cl}_{\g,d-1} \,\vo^{\frac{d-1}{d}}  \La^{\g+\frac{d-1}{2}}   \frac{B \lk \frac{1}{2}, \s_d + \frac{d+1}{2} \rk}{2} \, \hat B \lk \frac{\tau_\Omega}{\La} , \s_d + \frac{d+1}{2}, \g - \s_d \rk\\
& \,+ L^{cl}_{\g,d} \,\vo \,\La^{\g+\frac{d}{2}} \, \frac{1}{d} \, \B \lk \frac{\tilde \lambda}{\Lambda}, \frac{\tau_\Omega}{ \La},\s_d+ \frac d2 + 1, \g - \s_d \rk
\end{align*}
if $\La \geq \tau_\Omega$.
\end{cor}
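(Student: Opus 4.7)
The plan is to obtain Corollary \ref{corriesz} as a direct specialization of Theorem \ref{thmriesz} by choosing $\lambda=\tilde\lambda$, with $\tilde\lambda$ the Faber--Krahn lower bound from \eqref{rafakr}. This choice is admissible because $\tilde\lambda\leq\lambda_1$, so the hypothesis $\lambda\in[\tilde\lambda,\lambda_1]$ of Theorem \ref{thmriesz} is satisfied, and the range of $\Lambda$ becomes $\Lambda\geq\tilde\lambda$, exactly as in the corollary.

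The essential preliminary is to rule out case \eqref{a1} of Theorem \ref{thmriesz} by verifying the strict inequality $\tilde\lambda<\tau_\Omega$ for every dimension $d\geq 2$. The quickest route is a Faber--Krahn comparison with the cube $Q=[0,\vo^{1/d}]^d$, which has the same volume as $\Omega$ and first Dirichlet eigenvalue $\lambda_1(Q)=d\pi^2/\vo^{2/d}$. Since $\tilde\lambda$ is the first Dirichlet eigenvalue of the ball $B_R$ with $|B_R|=\vo$, Faber--Krahn gives $\tilde\lambda\leq\lambda_1(Q)=d\pi^2/\vo^{2/d}$, which is strictly smaller than $\tau_\Omega=d^2\pi^2/\vo^{2/d}$ as soon as $d\geq 2$. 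Equivalently, one can verify the Bessel inequality $j_{d/2-1,1}^2<\pi d^2\,\Gamma(d/2+1)^{2/d}$ directly, using the estimates on $j_{d/2-1,1}$ quoted from \cite{AbSt} in Section \ref{secthmheat}; for small $d$ one checks the cases $d=2,3$ numerically, and for large $d$ one combines $j_{d/2-1,1}=d/2+O(d^{1/3})$ with Stirling to see the inequality holds with a wide margin.

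Once $\tilde\lambda<\tau_\Omega$ is settled, only cases \eqref{a2} and \eqref{a3} of Theorem \ref{thmriesz} can occur, according to whether $\Lambda<\tau_\Omega$ or $\Lambda\geq\tau_\Omega$. Substituting $\lambda=\tilde\lambda$ into \eqref{a2} immediately produces the formula for $S(\Lambda)$ stated in the corollary in the regime $\Lambda<\tau_\Omega$, and substituting into \eqref{a3} produces the formula stated for $\Lambda\geq\tau_\Omega$, while the leading factor $\hat B(\lambda/\Lambda,\s_d+d/2+1,\g-\s_d)\Lambda^{\g+d/2}$ of Theorem \ref{thmriesz} becomes, with $\lambda=\tilde\lambda$, exactly the first term in the corollary.

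The main obstacle is genuinely the dimension-uniform check $\tilde\lambda<\tau_\Omega$: all other steps reduce to relabelling the parameter $\lambda$ as $\tilde\lambda$ in the formulas \eqref{a2} and \eqref{a3}. The cube-comparison argument above sidesteps any need for sharp upper bounds on the Bessel zero $j_{d/2-1,1}$, since it trades one Faber--Krahn statement for another and yields the needed inequality with a factor of $d$ to spare.
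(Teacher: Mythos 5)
Your proposal is correct and follows the paper's own route: Corollary \ref{corriesz} is obtained simply by setting $\lambda=\tilde\lambda$ in Theorem \ref{thmriesz}, which is admissible since $\tilde\lambda\leq\lambda_1$ by Faber--Krahn. Your cube comparison $\tilde\lambda\leq\lambda_1([0,\vo^{1/d}]^d)=d\pi^2\vo^{-2/d}<d^2\pi^2\vo^{-2/d}=\tau_\Omega$ for $d\geq 2$ neatly justifies that case \eqref{a1} of the theorem can never occur for this choice of $\lambda$ --- a point the paper leaves implicit --- so only the cases \eqref{a2} and \eqref{a3} survive and yield exactly the two formulas for $S(\La)$ in the corollary.
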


\begin{rem}
We can now compare this result with estimate (\ref{bly2}) from Proposition \ref{BLYW}. In both bounds the high energy asymptotics $\La \to \infty$ is dominated by the sharp first term. In view of \eqref{incplbeta} also the remainder terms show the correct order as $\La$ tends to infinity. In this limit the bound from Corollary \ref{corriesz} is stronger than (\ref{bly2}) whenever
\begin{equation*}
\varepsilon \lk \gamma + \frac{d-1}{2} \rk d_\La(\Omega) < \frac 12 \, B\lk \frac 12 , \s_d + \frac{d+1}{2} \rk \vo^{\frac{d-1}{d}}
\end{equation*}
holds true. We remark that the right hand side is independent of $\gamma$ while $\varepsilon \lk \gamma + \frac{d-1}{2} \rk$ tends to zero as $\gamma$ tends to infinity and $d_\La(\Omega)$ is bounded from above by the diameter of $\Omega$. Hence the condition above will be satisfied for large enough $\gamma$.

Moreover, the bound from Corollary  \ref{corriesz} contains the factor 
$$\hat B \lk \frac{\tilde \lambda}{\La}, \, \s_d+\frac{d}{2} +1, \gamma - \s_d \rk$$
which decays exponentially if $\La \to \tilde \lambda+$ and which improves the bound from Theorem \ref{thmriesz} in comparison to (\ref{bly2}) for values of $\La$ close to $\tilde \lambda$.
\end{rem}

\bibliography{heat100124a}
\bibliographystyle{amsalpha}

\vspace{2cm}

\begin{footnotesize}
\begin{center}

\texttt{UNIVERSITÄT STUTTGART, FB MATHEMATIK, PFAFFENWALDRING 57, 70569 STUTTGART, GERMANY}

\texttt{E-MAIL: leander.geisinger@mathematik.uni-stuttgart.de}

\end{center}
\end{footnotesize}

\end{document}